\title{Lump solutions to nonlinear partial differential equations via Hirota bilinear forms}
\author{
Wen-Xiu Ma\thanks{Email: mawx@cas.usf.edu} \ 
and Yuan Zhou\thanks{Email: zhouy@mail.usf.edu}
\\
{\small Department of Mathematics and Statistics,
University of South Florida, }
\\
{\small Tampa, FL 33620-5700, USA}
}
\date{}
\begin{document}

\maketitle

\numberwithin{equation}{section}

\newtheorem{theorem}{Theorem}[section]
\newtheorem{lemma}[theorem]{Lemma}
\newtheorem{corollary}[theorem]{Corollary}
\newtheorem{proposition}[theorem]{Proposition}
\newtheorem{example}[theorem]{Example}
\newtheorem{definition}[theorem]{Definition}
\newtheorem{xca}[theorem]{Exercise}
\newtheorem{remark}[theorem]{Remark}

\newcommand{\abs}[1]{\lvert#1\rvert}

\newcommand{\blankbox}[2]{%
  \parbox{\columnwidth}{\centering
    \setlength{\fboxsep}{0pt}%
    \fbox{\raisebox{0pt}[#2]{\hspace{#1}}}%
  }%
}

\newcommand{\di}{\displaystyle}
\newcommand{\dist}{\operatorname{dist}}
\newcommand{\ind}{\indent}
\newcommand{\ndd}{\noindent}
\newcommand{\beq}{\begin{equation}}
\newcommand{\eeq}{\end{equation}}
\newcommand{\bl}{\label}

\def\be{\begin{equation}}
\def\ee{\end{equation}}
\def\bea{\begin{eqnarray}}
\def\eea{\end{eqnarray}}
\def\bean{\begin{eqnarray*}}
\def\eean{\end{eqnarray*}}

\def\a{\alpha}
\def\b{\beta}
\def\d{\delta}
\def\D{\Delta}
\def\e{\varepsilon}
\def\f{\varphi}
\def\F{\Phi}
\def\g{\gamma}

\newcommand{\diag}{\mathop{\rm diag}}

\begin{abstract}

Lump solutions are analytical rational function solutions localized in all directions in space.
We analyze a class of lump solutions, generated from quadratic functions, to nonlinear partial differential equations.
The basis of success is the Hirota bilinear formulation and the primary object is the class of positive multivariate quadratic functions.
 A complete determination of quadratic functions positive in space and time is given, and 
positive quadratic functions are characterized as sums of squares of linear functions. 
Necessary and sufficient conditions for positive quadratic functions to solve Hirota bilinear equations are presented, and
such polynomial solutions yield lump solutions to nonlinear partial differential equations under the dependent variable transformations 
$u=2(\ln f)_x$ and $u=2(\ln f)_{xx}$, where $x$ is one spatial variable. Applications are made for a few generalized KP and BKP equations.

\noindent {\bf Ke words:} {Soliton, Integrable equation, Hirota bilinear form, Lump solution}

\noindent{\bf MSC numbers:} 35Q51, 37K40, 35Q53
 
\end{abstract}

\maketitle

\def \be {\begin{equation}}
\def \ee {\end{equation}}
\def \bea {\begin{eqnarray}}
\def \eea {\end{eqnarray}}
\def \ba {\begin{array}}
\def \ea {\end{array}}
\def \si {\sigma}
\def \al {\alpha}
\def \la {\lambda}
\def \D {\displaystyle }
\newcommand{\R}{\mathbb{R}}

\section{Introduction}
\setcounter{equation}{0}

The Korteweg-de Vries (KdV) equation and the Kadomtsev-Petviashvili (KP) equation
are nonlinear integrable differential equations, and their Hirota bilinear forms play a crucial role in generating their soliton solutions,
 a kind of exponentially localized solutions, describing diverse nonlinear phenomena (Hirota 2004).

By lump functions,
we mean analytical rational functions of spatial and temproal variables, which are localized in 
all directions in space. 
In recent years, there has been a growing interest
in lump function solutions (Berger and Milewski 2000; Gorshkov et al. 1993; Imai 1997;
Minzoni and Smyth 1996), called lump solutions
(see, e.g., Ablowitz and Clarkson 1991; Gilson and Nimmo 1990; Kaup 1981; Satsuma and Ablowitz 1979 
for typical examples).
 The KPI equation
\begin{equation}\label{eq:KPI3:ma-zhou-2015}
(u_t+6uu_x+u_{xxx})_x-3u_{yy}=0
\end{equation}
admits the following lump solution
\begin{equation}
u=4\frac {-[x+ay+3(a^2-b^2)t]^2+b^2(y+6at)^2+1/b^2}{\{[x+ay+3(a^2-b^2)t]^2
+b^2(y+6at)^2+1/b^2\}^2},\label{eq:lumpofKPI:ma-lump-jm-2015}
\end{equation}
where $a$ and $b\ne 0$ are free real constants (Manakov et al.1977).
Lump functions provide appropriate prototypes to model rogue wave dynamics
in both oceanography (Muller et al. 2005) and nonlinear optics (Solli et al. 2007).
There are various discussions on general rational function solutions to integrable equations
such as the KdV, KP, Boussinesq and Toda equations
(Ablowitz and Satsuma 1978; Adler and Moser 1978;
Ma et al. 2009; Ma and You 2004; Ma and You 2005).
It has become a very interesting topic to search for lump solutions or lump-type solutions, rationally localized solutions 
in almost all directions in space, to nonlinear partial differential equations, through the Hirota bilinear formulation.

In this paper, we would like to characterize 
positive quadratic functions and analyze
 positive quadratic function solutions to Hirota bilinear equations. Such
polynomial solutions generate 
lump or lump-type solutions to nonlinear partial differential equations under
the dependent variable 
transformations $u=2(\ln f)_x$ and $u=2(\ln f)_{xx}$, where $x$ is one of the spatial variables.
We will present sufficient and necessary conditions for positive quadratic functions to solve Hirota bilinear equations,
and apply the resulting theory to a few generalized KP and BKP equations.

\section{From Hirota bilinear equations to nonlinear equations}

 Let $M$ be a natural number and
$ x=(x_1,x_2,\cdots,x_M)^T$ in $\mathbb{R}^M$ be a column vector of independent variables.
For $f,g\in C^\infty ( \mathbb{R}^M)$, Hirota bilinear derivatives (Hirota 2004) are defined as follows:
\begin{equation}
D_1^{n_1} D_2^{n_2}
\cdots D_M^{n_M}
f\cdot g
:=\prod _{i=1}^M (\partial_{x_i}-\partial_{x_i'})^{n_i}
f(x)g(x')|_{x'=x},
\end{equation}
where $x'=(x'_1,x'_2,\cdots,x'_M)^T$ and $n_i\ge 0,\ 1\le i\le M$.
For example,
we have the first-order and second-order Hirota bilinear derivatives:
\be D_i f\cdot g=f_{x_i}g-fg_{x_i} , \   
D_i D_j f\cdot g=f_{x_i,x_j}g+fg_{x_i,x_j}-f_{x_i}g_{x_j}- f_{x_j}g_{x_i},
\ee
where $1\le i,j\le M$.

One basic property of the Hirota bilinear derivatives is that
\be D_{i_1}D_{i_2}\cdots D_{i_k} f\cdot g=(-1)^kD_{i_1}D_{i_2}\cdots D_{i_k} g\cdot f,\ee
where $1\le i_1,i_2,\cdots, i_k\le M$ need not be distinct.
It thus follows that if $k$ is odd, we have
\be D_{i_1}D_{i_2}\cdots D_{i_k} f\cdot f=0.
\ee
We will discuss the following general Hirota bilinear equation
\be\label{eq:GHEs:ma-zhou-2015}
 P(D)f\cdot f=P(D_1,D_2,\cdots,D_M)f\cdot f=0,
\ee
where $P$ is a polynomial of $M$ variables and $D= (D_{1}, D_{2}, \cdots$, $D_{M})$.
Since the terms of odd powers are all zeros, we assume that $P$ is an even polynomial, i.e., $P(-x)=P(x)$, and 
to generate non-zero polynomial solutions,
we require that $P$ has no 
constant term, i.e., $P(0)=0$.  
Moreover,
we set
\be P(x)=\sum_{i,j=1}^M
p_{ij}x_{i}x_{j}+\sum_{i,j,k,l=1}^Mp_{ijkl}
x_{i}x_{j}x_{k}x_{l}+\textrm{other terms},
\label{eq:defofP:ma-zhou-2015}
\ee
where $p_{ij} $ and $p_{ijkl}$ are coefficients of terms of second- and fourth-degree, to determine quadratic function solutions.

For convenience's sake, we adopt the index notation for partial derivatives of $f$:
\be f_{i_1i_2\cdots i_k}=\frac{\partial^k f}{\partial x_{i_1}\partial x_{i_2}\cdots\partial x_{i_k}},  \, \, 1\le i_1,i_2,\cdots, i_k\le M.
\ee
Using this notation, we have the compact expressions for the second- and fourth-order Hirota bilinear derivatives: 
 \be D_{i}D_{j}f\cdot f=2(f_{ij}f-f_{i}f_{j}),  \ 1\le i,j\le M,
\label{eq:formula1ofHirotaderivatives:ma-zhou-2015}
\ee
and
\begin{eqnarray}
 &&\quad D_{i}D_{j}D_{k}D_{l}(f\cdot f) \nonumber
\\
 &&= 2\bigl[f_{ijkl}f-f_{ijk}f_{l}-f_{ijl}f_{k}
-f_{ikl}f_{j}
\bigr.
\nonumber
\\
&&\quad
\bigl.
-f_{jkl}f_{i}+f_{ij}f_{kl}+f_{ik}f_{jl}+f_{il}f_{jk}\bigr],\ 1\le i,j,k,l\le M.
\label{eq:formula2ofHirotaderivatives:ma-zhou-2015}
\end{eqnarray}

Motivated by Bell polynomial theories on soliton equations
(Gilson et al. 1996; Ma 2013), we take
the dependent variable transformations:
\be
u=2(\ln f)_{x_1},\ u=2(\ln f)_{x_1x_1},\label{eq:Tranfs:ma-zhou-2015}
\ee
to formulate nonlinear differential equations from Hirota bilinear equations.
All integrable nonlinear equations can be generated this way (Chen 2006; Hirota 2004).

\begin{example}
  For the KdV equation 
  \be
  u_t+6uu_x+u_{xxx}=0 ,
  \ee
 the transformation $u=2(\ln f)_{xx}$ provides a link to the bilinear form
 \be
 (D_x D_t+D^4_x)f\cdot f=0.
  \ee
  For the KPI and KPII equations
  \be\label{eq:KPIsigma:ma-zhou-2015}
  (u_t+6uu_x+u_{xxx})_x+\sigma u_{yy}=0 ,\ \sigma=\mp 1,
  \ee
 the transformation $u=2(\ln f)_{xx}$ makes connection with the bilinear form
 \be
 (D_x D_t+D^4_x+\sigma  D_y^2)f\cdot f=0.
  \ee
\end{example}

If a polynomial solution $f$ is positive, then
the solution $u$ defined
by either of the  dependent variable transformations in (\ref{eq:Tranfs:ma-zhou-2015}) is analytical,
and most likely, rationally localized in space, and thus it often presents a lump solution to the corresponding nonlinear differential equation.
In what follows, we would like to analyze quadratic function solutions to Hirota bilinear equations to construct lump solutions
to nonlinear differential equations.

\section{Positive quadratic function solutions to bilinear equations}

\subsection{Non-negative and positive quadratic functions}

Let us consider a general quadratic function
\be
 f(x)=x^TAx-2b^Tx+ c , \  x\in \mathbb{R}^M,\label{eq:defoff:ma-zhou-2015}
\ee
where
$A\in \mathbb{R}^{M\times M}$ is a symmetric matrix,
 $b\in \mathbb{R}^M$ denotes a column vector, $ c \in \mathbb{R}$ is a constant and $T$ denotes transpose.

We say that a polynomial $f$  is non-negative (or positive) if $f(x)\ge 0,\  \forall x\in \mathbb{R}^M$
(or $f(x)> 0,\  \forall x\in \mathbb{R}^M$).
We need the pseudoinverse of a matrix
to determine the non-negativity (or positivity) of a quadratic function.

For a matrix $A\in \mathbb{R}^{N\times M}$, we call a matrix
$A^+\in\mathbb{R}^{M\times N}$ the Moore-Penrose pseudoinverse  of  $A$ if
\be AA^+A=A,\ A^+AA^+=A^+,\  (AA^+)^T=AA^+,\ (A^+A)^T=A^+A,
\label{eq:defofA^+:ma-zhou-2015}
\ee
which uniquely defines $A^+$ 
for any given matrix $A$ (Penrose 1955).
Obviously, the Moore-Penrose pseudoinverse of a zero matrix is the zero matrix itself, and $(A^+)^T=(A^T)^+$, which implies that
 if $A$ is symmetric, then so is $A^+$.
When a square matrix $A$ is non-singular, i.e., $|A|=\det (A)\ne 0$, we have $A^+=A^{-1}$, $A^{-1}$ being the inverse of $A$.

Suppose that a non-zero matrix $A\in \mathbb{R}^{N\times M}$ has its
singular value decomposition
\be
 A=U \left[\begin{array} {cc} \Sigma &0 \vspace{2mm}\\
0&0 \end{array} \right]
V^T,
 \ee
 where $U\in \mathbb{R}^{N\times N}$ and $V\in \mathbb{R}^{M\times M} $ are orthogonal matrices, and $\Sigma$ reads 
\be
 \Sigma=\textrm{diag}(d_1,\cdots,d_r) ,\   d_1\ge\cdots\ge d_r>0,\ r=\textrm{rank}(A).
\ee
 Then
the Moore-Penrose pseudoinverse of $A$ is given by 
\be
A^+=V
\left[\begin{array} {cc} \Sigma ^{-1} &0 \vspace{2mm}\\
0&0 \end{array} \right]
U^T.\ee

The Moore-Penrose pseudoinverse can be 
applied to
analysis of linear systems (Penrose 1955).
A linear system $A\alpha =b$ 
is consistent if and only if 
$AA^+b=b$.
Moreover, if it is consistent, then 
its solution set 
is given by
\[
\{\alpha = A^+b +(I_M-A^+A)\beta \, | \, \beta \in \mathbb{R}^{M}\},
\]
where $I_M$ is the identity matrix of size $M$.

\begin{lemma}
\label{lem:basicpropertiesofA^+:ma-zhou-2015}
Let $A\in \mathbb{R}^{M\times M} $ be symmetric and $b \in \mathbb{R}^{M}$ be arbitrary.
If $\alpha \in \mathbb{R}^{M}$ solves $A\alpha =b$, then
\be \alpha ^T A\alpha = b^TA^+b ,
\label{eq:aninvariantofls:ma-zhou-2015}
\ee
and further,
\be  (x-\alpha )^TA(x-\alpha ) = (x-A^+b)^TA(x-A^+b). \label{eq:thesameresultforallsols:ma-zhou-2015}\ee
\end{lemma}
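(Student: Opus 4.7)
The plan is to reduce both identities to the four Penrose identities defining $A^+$ together with the symmetry of $A$.

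For \eqref{eq:aninvariantofls:ma-zhou-2015}, I would substitute $b = A\alpha$ directly on the right-hand side to obtain
\[
b^T A^+ b = (A\alpha)^T A^+ (A\alpha) = \alpha^T A^T A^+ A\, \alpha = \alpha^T (A A^+ A)\, \alpha = \alpha^T A \alpha,
\]
where I use $A^T = A$ and the first Penrose identity $AA^+ A = A$. This is a one-line calculation with no real obstacle.

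For \eqref{eq:thesameresultforallsols:ma-zhou-2015}, I would first observe that the mere existence of a solution $\alpha$ of $A\alpha = b$ forces the consistency condition $AA^+ b = b$ noted in the paragraph preceding the lemma; equivalently, $A^+ b$ is itself a solution of $Ax = b$. The identity then amounts to the statement that the quadratic form $(x - \alpha')^T A (x - \alpha')$ does not depend on the choice of solution $\alpha'$ of $Ax = b$, applied with $\alpha' = \alpha$ and $\alpha' = A^+ b$. To prove this general independence I set $\gamma = \alpha - A^+ b$, so that $A\gamma = 0$, and expand
\[
(x - \alpha)^T A (x - \alpha) = \bigl((x - A^+ b) - \gamma\bigr)^T A \bigl((x - A^+ b) - \gamma\bigr).
\]
Beyond the target $(x - A^+ b)^T A (x - A^+ b)$, the expansion produces two cross terms involving $\gamma^T A (x - A^+ b)$ together with a tail $\gamma^T A \gamma$; each vanishes because $A\gamma = 0$ and $\gamma^T A = (A\gamma)^T = 0$ by symmetry of $A$. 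The only conceptual point in the whole argument is the recognition that $A^+ b$ is itself a solution of $Ax = b$ whenever the system is consistent; once that is in place, both identities collapse to brief applications of the Penrose axioms.
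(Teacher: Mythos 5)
Your proof is correct. For \eqref{eq:aninvariantofls:ma-zhou-2015} your computation is the same as the paper's, just read in the opposite direction: both substitute $b=A\alpha$, use $A^T=A$, and collapse $AA^+A$ to $A$ via the first Penrose identity. For \eqref{eq:thesameresultforallsols:ma-zhou-2015} you take a slightly different and somewhat cleaner route: you isolate the kernel vector $\gamma=\alpha-A^+b$, justify $A\gamma=0$ from the consistency identity $AA^+b=AA^+A\alpha=A\alpha=b$, and kill the cross terms and the tail $\gamma^TA\gamma$ at once using $\gamma^TA=(A\gamma)^T=0$. The paper instead expands both quadratic forms and checks term by term that each cross term ($\alpha^TAx$, $x^TA\alpha$, $(A^+b)^TAx$, $x^TA(A^+b)$) equals $b^Tx$, then invokes \eqref{eq:aninvariantofls:ma-zhou-2015} to match the remaining constant terms. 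Your version makes explicit the underlying structural fact --- that the centered quadratic form is independent of which solution of $A\alpha'=b$ one centers at --- which the paper only records afterwards as a remark on $\alpha^{(1)}-\alpha^{(2)}\in\ker(A)$; the paper's version avoids naming $\gamma$ but does essentially the same bookkeeping. Both arguments are complete and rest on exactly the same ingredients.
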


\begin{proof}
Recalling the first property in (\ref{eq:defofA^+:ma-zhou-2015}) and  
using 
$A^T=A$,
we have
\[
\alpha ^TA\alpha =\alpha ^TAA^+A\alpha
=\alpha ^T A^T A^+A\alpha 
= b ^TA^+b.
\]
Therefore,  
(\ref{eq:aninvariantofls:ma-zhou-2015}) holds.
Now, noting that
\[\alpha ^TAx=(A\alpha )^T x =b^Tx,\  x^T (A \alpha) =x^T b =b^Tx,\]
 \[
(A^+b)^TAx=b^TA^+Ax=\alpha ^T A A^+Ax=\alpha ^TAx= (A\alpha )^Tx=b^Tx,
\]
and \[
x^TA(A^+b)=x ^T A A^+A\alpha =x ^TA\alpha = x^T b=b^Tx,
\]
we see that (\ref{eq:thesameresultforallsols:ma-zhou-2015}) follows directly from (\ref{eq:aninvariantofls:ma-zhou-2015}).
 \end{proof}

We denote a positive-semidefinite (or positive-definite) matrix $A\in \mathbb{R}^{M\times M}$ by $A\ge 0$ (or $A> 0$). Namely,
$A\ge 0$ (or $A>0$) means that $x^TAx\ge 0$ for all $x\in   \mathbb{R}^M$
(or $x^TAx>0$ for all non-zero $x\in   \mathbb{R}^M$).
The following theorem gives a description of non-negative (or positive) quadratic functions.

\begin{theorem}
\label{thm:positiveness:ma-zhou-2015}
Let a quadratic function 
$f$ be defined by (\ref{eq:defoff:ma-zhou-2015}).
Then 
 (a) 
 if $b \in \textrm{range}(A)$, then 
\begin{eqnarray}&&
 f(x)=
(x-\alpha)^TA(x-\alpha)+ c -\alpha^TA\alpha
\nonumber 
\\
&& \qquad =
(x-A^+b)^TA(x-A^+b)+ c -b^TA^+b,
\label{eq:PQFs:ma-zhou-2015} \end{eqnarray}
where $\alpha \in \mathbb{R}^M$ solves $A\alpha =b$;
and 
(b) $f$ is non-negative (or positive) if and only if
 $A\ge 0$, $b \in \textrm{range}(A)$ and
\be d = c -b^TA^+b
\label{eq:formulaforc:ma-zhou-2015}
\ee
is non-negative (or positive).
\end{theorem}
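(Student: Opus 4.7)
The plan is to handle (a) by direct calculation and then derive (b) from it together with a behavior-at-infinity analysis.

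For part (a), I would pick any $\alpha\in\mathbb{R}^M$ with $A\alpha=b$ (available since $b\in\text{range}(A)$) and expand
\[
(x-\alpha)^TA(x-\alpha)=x^TAx-2(A\alpha)^Tx+\alpha^TA\alpha=x^TAx-2b^Tx+\alpha^TA\alpha,
\]
which, compared with the definition of $f$ in (\ref{eq:defoff:ma-zhou-2015}), gives the first equality in (\ref{eq:PQFs:ma-zhou-2015}). The second equality is then immediate from Lemma \ref{lem:basicpropertiesofA^+:ma-zhou-2015}, since $A^+b$ itself is one of the solutions of $A\alpha=b$ and the expression $(x-\alpha)^TA(x-\alpha)$ is invariant across all such solutions by (\ref{eq:thesameresultforallsols:ma-zhou-2015}).

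For part (b), the sufficiency direction is essentially a one-liner: if $A\ge 0$, $b\in\text{range}(A)$ and $d\ge 0$ (resp.\ $d>0$), then (a) yields $f(x)=(x-A^+b)^TA(x-A^+b)+d\ge d$, and the claim follows. For necessity, assume $f\ge 0$ on $\mathbb{R}^M$ and argue in three steps. First, $A\ge 0$: if some eigenvector $v$ of $A$ had negative eigenvalue $\lambda$, then $f(tv)=\lambda t^2\|v\|^2-2tb^Tv+c\to-\infty$ as $|t|\to\infty$, a contradiction. Second, $b\in\text{range}(A)$: since $A$ is symmetric we have the orthogonal splitting $\mathbb{R}^M=\text{range}(A)\oplus\ker(A)$, so write $b=b_1+b_2$ with $b_1\in\text{range}(A)$ and $b_2\in\ker(A)$; testing $x=tb_2$ gives $x^TAx=0$ and $b^Tx=t\|b_2\|^2$ (because $b_1\perp b_2$), hence $f(tb_2)=-2t\|b_2\|^2+c$, which is unbounded below unless $b_2=0$. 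Third, having established $b\in\text{range}(A)$, part (a) applies, and setting $x=A^+b$ gives $f(A^+b)=d$, so non-negativity of $f$ forces $d\ge 0$ and positivity of $f$ forces $d>0$.

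The step I expect to be the main obstacle is the proof that $b\in\text{range}(A)$: it is the only place where $A\ge 0$ alone is not enough and where the orthogonal decomposition of $\mathbb{R}^M$ under symmetry of $A$ is essential. One has to be careful that the cross term $b_1^Tb_2$ truly vanishes (so the linear coefficient in $t$ is genuinely $\|b_2\|^2$ and not something that could accidentally be zero), and that the quadratic term along $b_2$ is exactly zero (not merely bounded). Once that lemma is in hand, the rest of (b) is bookkeeping tied to part (a) and Lemma \ref{lem:basicpropertiesofA^+:ma-zhou-2015}.
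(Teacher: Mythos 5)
Your proposal is correct and follows essentially the same route as the paper: completing the square via a solution of $A\alpha=b$ for part (a), invoking Lemma \ref{lem:basicpropertiesofA^+:ma-zhou-2015} for the $A^+b$ form, and for necessity in (b) using unboundedness along a direction with $\beta^TA\beta<0$ and along the component of $b$ orthogonal to $\textrm{range}(A)$ (which is $\ker(A)$ by symmetry), then evaluating at the minimizer to get $d\ge 0$ (or $>0$). The only cosmetic differences are that you test the simpler point $tb_2$ where the paper tests $\alpha+rb^{(2)}$, and you use eigenvectors rather than an arbitrary vector with negative quadratic form; both variants are sound.
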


\begin{proof} 
(a) First, based on  Lemma
\ref{lem:basicpropertiesofA^+:ma-zhou-2015},
it is sufficient to show that
\be 
 f(x)=x^TAx-2\alpha ^TAx+ c =(x-\alpha)^TA(x-\alpha)+ c -\alpha^TA\alpha,
\label{eq:formulaforf(x):ma-zhou-2015}
\ee
where we have made use of $b=A\alpha $ and $A^T=A$.

(b) Second, we prove part (b). 

($\Leftarrow$) 
This directly follows from the second equality of 
\eqref{eq:PQFs:ma-zhou-2015} in
part (a).

($\Rightarrow$)  Suppose that $A\ge 0$ is false. Then there exists a vector $\beta\in \mathbb{R}^M$ such that $\beta^TA\beta<0$, and further  for $r \in \mathbb{R}$, we have
$$f(r\beta)=r^2\beta^TA\beta-2rb^T\beta+ c  \to -\infty,\  \mathrm{as}\   r\to\pm \infty.$$
This is a contradiction to the assumption on $f$ that $f$ is non-negative (or positive). Therefore, we have $A\ge 0$.

Now
let $b=b^{(1)}+b^{(2)}$ with $b^{(1)}\in \mathrm{range}(A)$ and $b^{(2)}\in \mathrm{range}(A)^\bot$.
Assume that $ \alpha\in \mathbb{R}^M$ satisfies $A\alpha=b^{(1)}$. Consider $x=\alpha+r b^{(2)}$, with $r$ being a positive number. Then we can have
$$
\begin{array}{l}
  f(x)=x^TAx-2\alpha^TAx-2b^{(2)T}x+ c  \vspace{1.5mm}
\\
  \qquad \  =(x-\alpha)^TA(x-\alpha)-2b^{(2)T}x+ c -\alpha ^T A \alpha
\vspace{1.5mm}
\\
  \qquad \
=r^2b^{(2)T}Ab^{(2)}-2b^{(2)T}\alpha-2rb^{(2)T}b^{(2)}+ c -\alpha ^T A \alpha
\vspace{1.5mm}
\\
  \qquad \
=-2r b^{(2)T}b^{(2)}-2b^{(2)T}\alpha+ c -\alpha ^T A \alpha
  \to -\infty,\  \mathrm{as}\  r\to \infty,
\end{array}$$
if $b^{(2)}\ne 0$.
Therefore $b^{(2)}=0$, since $f$ is non-negative (or positive). This implies $b\in \mathrm{range}(A)$.
Further, $ d =f(\alpha)\ge 0$ (or $>0$).
 The proof is finished.
 \end{proof}

Any two solutions $\alpha^{(1)}$ and $\alpha^{(2)}$ to $A\alpha=b$ satisfy $A(\alpha^{(1)}-\alpha^{(2)})=A\alpha^{(1)}-A\alpha^{(2)}=0$, 
which means that $\alpha^{(1)}-\alpha^{(2)}\in$ ker$(A)$ and thus
\[\alpha ^{(1)T}A\alpha ^{(1)}=
\alpha ^{(1)T}A\alpha ^{(2)}=\alpha ^{(2)T}A\alpha ^{(1)}
=\alpha ^{(2)T}A\alpha ^{(2)}.
\]
This is just a consequence of (\ref{eq:aninvariantofls:ma-zhou-2015}).
We also point out that all the results presented in an earlier paper (Jankovic 2005) are consequences of our results in Theorem \ref{thm:positiveness:ma-zhou-2015}.
For example, we can have 
the last two thorems in (Jankovic 2005), i.e.,
Theorems 6 and 7 in (Jankovic 2005): 
a quadratic function $f$ is bounded from below 
if and only if $f$
reaches its minimum at a point $x_0\in \mathbb{R}^M$ if and only if $A\ge 0$ and $Ax_0=b$, where $f$ is assumed to 
be given by (\ref{eq:defoff:ma-zhou-2015}).
Actually, Theorem \ref{thm:positiveness:ma-zhou-2015} also tells that $f$ achieves its minimum at any point $\alpha \in \mathbb{R}^M$, where $\alpha $ is a solution 
to $A\alpha =b$, and its minimum is 
$c-b^TA^+b$. We proves the result on the extreme value as follows. 

{\corollary 
If a quadratic function defined by (\ref{eq:defoff:ma-zhou-2015}) reaches its minimum or maximum, then its extreme value is $c-b^TA^+b$.
}
\begin{proof}
If $f$ reaches its maximum $\gamma$, then $g=f-\gamma  $ is non-negative and by Theorem 
\ref{thm:positiveness:ma-zhou-2015}, we have
 $g(x)=(x-\alpha )^TA(x-\alpha)+ (c-\gamma )-b^TA^+b$ with $A\ge 0$, which says that 
$f(x) \ge   c-b^TA^+b$ and $f(\alpha )= c-b^TA^+b$, and so the minimum value of  $f$ is $ c-b^TA^+b$. 
If $f$ reaches its maximum, then $g=-f$ reaches its minimum. Therefore, as we just proved, $g$ achieves the minimum value $-c+b^TA^+b$, and so 
 $f$ has the maximum value  $ c-b^TA^+b$, which completes the proof. 
\end{proof}

\subsection{Positive quadratic function solutions}

Let $\alpha=(\alpha _1,\cdots,\alpha_M)^T
\in \mathbb{R}^M$ be a fixed vector. Consider
 a quadratic function defined as follows:
\be\label{eq:concretePQFs:ma-zhou-2015}
f(x)=(x-\alpha)^TA(x-\alpha)+
d
=\sum_{i,j=1}^Ma_{ij}{(x_i-\alpha_i)}{(x_j-\alpha_j)}+
d ,
\ee
where the real matrix $A=(a_{ij})_{M\times M}$ is symmetric and $ d\in \mathbb{R}$ is a constant.
Theorem \ref{thm:positiveness:ma-zhou-2015}
 guarantees that when $A\ge 0$ and $d>0$, this presents the class of positive quadratic functions.

Obviously, we have
\[D_{i_1}D_{i_2}\cdots D_{i_k} f\cdot f=0, \ 1\le i_j\le M, \ 1\le j\le k, \ k>4,\]
 for any quadratic function
$f$. 
Moreover, because all odd-order Hirota bilinear derivative terms in 
the Hirota bilinear equation
(\ref{eq:GHEs:ma-zhou-2015}) are zero,
the bilinear equation (\ref{eq:GHEs:ma-zhou-2015}) is reduced to
\be\label{eq:reducedHBEs:ma-zhou-2015}
 Q(D)f\cdot  f=0,
\ee
where
\be Q(x)=\sum_{i,j=1}^Mp_{ij}x_{i}x_{j}+\sum_{i,j,k,l=1}^Mp_{ijkl}
x_{i}x_{j}x_{k}x_{l},
\ee
since $Q(D)f\cdot f=P(D)f\cdot f$ for $P$ defined by (\ref{eq:defofP:ma-zhou-2015}).

Now we compute the second- and fourth-order Hirota bilinear derivatives of a
positive quadratic function defined by (\ref{eq:concretePQFs:ma-zhou-2015}).
Note that
 \[
f_{i}=2\sum_{k=1}^M a_{ik}(x_k-\alpha_k)=2A_i^T(x-\alpha),\
f_{ij}=2a_{ij},\ 1\le i,j\le M,
\]
 where $A_i$ is the $i$th column vector of $A$ for $1\le i\le M$.
We denote $y=x-\alpha.$ Then using
(\ref{eq:formula1ofHirotaderivatives:ma-zhou-2015}), we have
 \begin{eqnarray}
 &&
\sum_{i,j=1}^Mp_{ij}D_{i}D_{j}f\cdot f
=4\sum_{i,j=1}^Mp_{ij}a_{ij}f-8\sum_{i,j=1}^M p_{ij} y^TA_iA_j^T y
\nonumber \\
&&
 =4 d \sum_{i,j=1}^M p_{ij}a_{ij}+4y^T\Bigr[\sum_{i,j=1}^M p_{ij}(a_{ij}A-A_iA_j^T-A_jA_i^T)\Bigl]y.
 \end{eqnarray}
By (\ref{eq:formula2ofHirotaderivatives:ma-zhou-2015}), the fourth-order Hirota bilinear derivatives of $f$ in (\ref{eq:concretePQFs:ma-zhou-2015}) read
\be
 D_{i}D_{j}D_{k}D_{l}f\cdot f=2(f_{ij}f_{kl}
+f_{ik}f_{jl}+f_{il}f_{jk})
=8(a_{ij}a_{kl}+a_{ik}a_{jl}+a_{il}a_{jk}).
  \ee
Thus, if (\ref{eq:concretePQFs:ma-zhou-2015}) solves
the Hirota bilinear equation
(\ref{eq:GHEs:ma-zhou-2015}), i.e.,
the reduced Hirota bilinear equation
(\ref{eq:reducedHBEs:ma-zhou-2015}), then we have
\begin{eqnarray}
&&  8\sum_{i,j,k,l=1}^Mp_{ijkl}(a_{ij}a_{kl}+a_{ik}a_{jl}+a_{il}a_{jk})
+4 d \sum_{i,j=1}^Mp_{ij}a_{ij}
\nonumber
\\
&& +y^T\Bigl[\sum_{i,j=1}^Mp_{ij}(a_{ij}A-A_iA_j^T-A_jA_i^T)\Bigr]y=0.
\end{eqnarray}
Note $x\in \mathbb{R}^M$ is arbitrary, and so is $y=x-\alpha $. Therefore, we obtain the following result.

 \begin{theorem}
\label{thm:condsofpositivesols:ma-zhou-2015}
Let $A=(a_{ij})_{M\times M}\in \mathbb{R}^{M\times M}$ be symmetric and $d\in \mathbb{R}$ be arbitrary.  
A quadratic function $f$ defined by (\ref{eq:concretePQFs:ma-zhou-2015}) solves the Hirota bilinear equation
(\ref{eq:GHEs:ma-zhou-2015})
if and only if
   \be\label{eq:c1forpolysol:ma-zhou-2015}
   \displaystyle 2\sum_{i,j,k,l=1}^Mp_{ijkl}(a_{ij}a_{kl}+a_{ik}a_{jl}+a_{il}a_{jk})
+ d \sum_{i,j=1}^Mp_{ij}a_{ij}=0
\ee
   and
\be
\label{eq:c2forpolysol:ma-zhou-2015}
\sum_{i,j=1}^Mp_{ij}(a_{ij}A-A_iA_j^T-A_jA_i^T)=0,
\ee
where $A_i$ denotes the $i$th column vector of the symmetric matrix $A$ for $1\le i\le M$.
 \end{theorem}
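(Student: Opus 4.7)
The plan is to show both directions by reducing $P(D)f\cdot f$ to a polynomial in $y := x-\alpha$ of degree at most two and then separating that polynomial into its homogeneous pieces. The forward direction (necessity) and the converse (sufficiency) will then both follow from the elementary fact that a polynomial in $y\in\mathbb R^M$ which vanishes identically must have each homogeneous component equal to zero.

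First I would invoke the antisymmetry identity $D_{i_1}\cdots D_{i_k}f\cdot f=0$ for odd $k$, together with the observation that for a quadratic $f$ all Hirota derivatives of order $\ge 5$ in $f\cdot f$ vanish (third and higher partials of $f$ are zero, so by the explicit formulas only $f_{ij}f_{kl}$-type terms survive up to order $4$). This collapses $P(D)f\cdot f=0$ to $Q(D)f\cdot f=0$, where $Q$ retains only the degree-$2$ and degree-$4$ monomials of $P$, exactly as displayed in \eqref{eq:reducedHBEs:ma-zhou-2015}.

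Next I would plug the derivatives $f_i=2A_i^T y$ and $f_{ij}=2a_{ij}$ into the second- and fourth-order formulas \eqref{eq:formula1ofHirotaderivatives:ma-zhou-2015}--\eqref{eq:formula2ofHirotaderivatives:ma-zhou-2015}. The degree-4 part contributes only a constant (in $y$), while the degree-2 part contributes a constant $4d\sum p_{ij}a_{ij}$ plus the quadratic form
\[
4\, y^T\!\Bigl[\sum_{i,j=1}^M p_{ij}\bigl(a_{ij}A-A_iA_j^T-A_jA_i^T\bigr)\Bigr]\, y .
\]
This is exactly the identity already computed in the paragraph preceding the theorem, so I would simply reference that computation rather than redo it.

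Finally, since $x\in\mathbb R^M$ is arbitrary and $\alpha$ is fixed, $y$ ranges over all of $\mathbb R^M$. A polynomial of degree $\le 2$ that vanishes on all of $\mathbb R^M$ must have its constant term and its quadratic form vanish separately; and since the bracketed matrix in the quadratic form is symmetric (a sum of $p_{ij}$-weighted symmetric building blocks $a_{ij}A-A_iA_j^T-A_jA_i^T$, noting the symmetry of $p_{ij}$ after averaging $i\leftrightarrow j$), the vanishing of the quadratic form is equivalent to the vanishing of that matrix. This yields exactly \eqref{eq:c1forpolysol:ma-zhou-2015} and \eqref{eq:c2forpolysol:ma-zhou-2015}, and conversely these two conditions force $P(D)f\cdot f\equiv 0$. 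The only mildly delicate point is the symmetry of the matrix coefficient of $y^T(\cdot)y$, which justifies passing from ``quadratic form vanishes'' to ``matrix vanishes''; everything else is substitution and collection of terms.
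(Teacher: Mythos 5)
Your proposal is correct and follows essentially the same route as the paper: substitute $f_i=2A_i^Ty$, $f_{ij}=2a_{ij}$ into the reduced bilinear form, collect the result into a constant plus a quadratic form in $y=x-\alpha$, and conclude from the arbitrariness of $y$ that both pieces vanish separately. Your added remark that the matrix coefficient of $y^T(\cdot)y$ is symmetric (each block $a_{ij}A-A_iA_j^T-A_jA_i^T$ is already symmetric, so no averaging over $i\leftrightarrow j$ is even needed) is a small point the paper leaves implicit, and it correctly justifies passing from the vanishing of the quadratic form to the vanishing of the matrix itself.
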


{\corollary \label{cor:constant:ma-zhou-2015}If $f(x)=x^TAx+ d $ solves the Hirota bilinear equation (\ref{eq:GHEs:ma-zhou-2015}), then for any $\a\in \mathbb{R}^M$,
$f(x-\a)$ solves the Hirota bilinear equation (\ref{eq:GHEs:ma-zhou-2015}), too.
}
\begin{proof} This is because (\ref{eq:c1forpolysol:ma-zhou-2015}) and (\ref{eq:c2forpolysol:ma-zhou-2015}) 
only depend on the matrix $A$ and the constant $ d $, but do not depend on the shift vector $\alpha$.
\end{proof}

We denote
 the coefficient matrix of the second order Hirota bilinear derivative terms by
 \be 
\label{eq:defofP^{(2)}:ma-zhou-2015} 
P^{(2)}=(p_{ij})_{M\times M}\in \mathbb{R}^{M\times M},\ee
in the Hirota bilinear equation (\ref{eq:GHEs:ma-zhou-2015}).
When $P^{(2)}=0$, the matrix equation \eqref{eq:c2forpolysol:ma-zhou-2015}
is automatically satisfied and the scalar equation \eqref{eq:c1forpolysol:ma-zhou-2015} reduces to
   \begin{equation}
\label{eq:cond4forpolysol:ma-zhou-2015}
   \displaystyle \sum_{i,j,k,l=1}^Mp_{ijkl}(a_{ij}a_{kl}+a_{ik}a_{jl}+a_{il}a_{jk})
=0.
\end{equation}
If $M\ge 2 $, for a fixed matrix $A$, 
obviously there exists infinitely many non-zero solutions of $p_{ijkl},\  1\le i,j,k,l\le M,$ to the equation \eqref{eq:cond4forpolysol:ma-zhou-2015}.

Let us now consider quadratic function solutions with $|A|\ne 0$. 

If $M=1$, then $a_{11}\ne 0$. Therefore,  
\eqref{eq:c1forpolysol:ma-zhou-2015}  and \eqref{eq:c2forpolysol:ma-zhou-2015} equivalently yield
\[ p_{11}=p_{1111}=0.\]
This means that 
a bilinear ordinary differential equation defined by 
\eqref{eq:GHEs:ma-zhou-2015} has a quadractic function solution if and only if 
the least degree of a polynomial $P$ must be greater than 5. 

If $M=2$, 
we have th following example in (1+1)-dimensions.
Consider the function
$f(x,t)= 3x^2-2xt+t^2+\frac {27}2,  $
where $A=\left[ \begin{array}{cc} 3 &-1\\ -1 & 1 \end{array}\right]$ with $|A|=2>0$.
Obviously, this quadratic polynomial is positive,
and
 solves the following (1+1)-dimensional Hirota bilinear equation:
$$(D_x^4-D_x^2-2D_tD_x-3D_t^2)f\cdot f=0,$$
where the symmetric coefficient matrix $P^{(2)}=
\left[\begin{array}{cc} -1 & -1 \vspace{0mm}\\
-1 & -3 \end{array} \right] 
$ is not zero. This function $f$ leads to lump solutions to 
the corresponding nonlinear equations
under $u=2(\ln f)_x$ or $u=2(\ln f)_{xx}$.   

When $M\ge 3$, there is a totally different situation.
What kind of Hirota bilinear equations
(\ref{eq:GHEs:ma-zhou-2015})
can possess a quadratic function solution 
defined by (\ref{eq:concretePQFs:ma-zhou-2015}) 
with $|A|\ne 0$?
The following theorem provides a complete
answer to this question. 

 \begin{theorem}
\label{thm:condsofAnonsingular:ma-zhou-2015}
Let $M\ge 3$.
Assume that a quadratic function $f$ defined by
\eqref{eq:concretePQFs:ma-zhou-2015} 
solves the Hirota bilinear equation
\eqref{eq:GHEs:ma-zhou-2015} with $P$ defined by  
\eqref{eq:defofP:ma-zhou-2015}.
If $|A|\ne 0$, i.e., $A$ is non-singular, 
then \be p_{ij}+p_{ji}=0,\ 1\le i,j\le M,\ee
which means that the Hirota bilinear equation
\eqref{eq:GHEs:ma-zhou-2015} doesn't contain any second-order Hirota bilinear derivative term.
\end{theorem}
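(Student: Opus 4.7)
The plan is to exploit the matrix equation \eqref{eq:c2forpolysol:ma-zhou-2015} supplied by Theorem~\ref{thm:condsofpositivesols:ma-zhou-2015}, combined with the invertibility of $A$, to force the symmetric part of the coefficient matrix $P^{(2)}=(p_{ij})$ to vanish. Since only the symmetric part of $(p_{ij})$ enters the quadratic form in $P$, I would work with the symmetric matrix $S = P^{(2)} + P^{(2)T}$, whose $(i,j)$ entry is exactly $p_{ij}+p_{ji}$, and show that $S=0$.

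First I would recast \eqref{eq:c2forpolysol:ma-zhou-2015} as a single clean matrix identity. Writing $A_i = Ae_i$ and using $A^T = A$, one has $A_iA_j^T = Ae_ie_j^T A$, so that
$$\sum_{i,j=1}^M p_{ij}\bigl(A_iA_j^T + A_jA_i^T\bigr) \;=\; A\bigl(P^{(2)} + P^{(2)T}\bigr)A \;=\; ASA.$$
A direct computation using cyclicity of the trace and the symmetry of $A$ also gives $\sum_{i,j} p_{ij}\,a_{ij} = \mathrm{tr}(P^{(2)}A) = \tfrac12 \mathrm{tr}(SA)$. Consequently, \eqref{eq:c2forpolysol:ma-zhou-2015} collapses to the compact identity
$$\tfrac12\,\mathrm{tr}(SA)\cdot A \;=\; A\,S\,A.$$

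Next, invoking $|A|\ne 0$, I would multiply this identity by $A^{-1}$ on both the left and the right to obtain $S = \tfrac12\,\mathrm{tr}(SA)\cdot A^{-1}$, which in turn yields $SA = \tfrac12\,\mathrm{tr}(SA)\cdot I_M$. Taking the trace of both sides gives the scalar identity $\mathrm{tr}(SA) = \tfrac{M}{2}\,\mathrm{tr}(SA)$, i.e., $(M-2)\,\mathrm{tr}(SA)=0$. Since $M\ge 3$, this forces $\mathrm{tr}(SA)=0$, and hence $S=0$, so $p_{ij}+p_{ji}=0$ for all $i,j$, as required.

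I do not anticipate a serious obstacle here: the argument is a short piece of linear algebra once \eqref{eq:c2forpolysol:ma-zhou-2015} is rewritten in the form above. The only delicate feature is the role of the hypothesis $M\ge 3$, which enters purely through $\mathrm{tr}(I_M)=M$. For $M=2$ the factor $M-2$ vanishes and the scalar $\mathrm{tr}(SA)$ is unconstrained; this is consistent with the $(1{+}1)$-dimensional example $f(x,t) = 3x^2-2xt+t^2+\tfrac{27}{2}$ discussed just before the theorem, whose associated $P^{(2)}$ is non-zero.
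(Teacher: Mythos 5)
Your proof is correct and follows essentially the same route as the paper's: both reduce \eqref{eq:c2forpolysol:ma-zhou-2015} to the identity $\tilde a\, A = A\bigl(\text{symmetric part of }P^{(2)}\bigr)A$ (up to a factor of $2$), use $|A|\ne 0$ to solve for the coefficient matrix as a multiple of $A^{-1}$, and then extract $(M-2)\,\tilde a=0$ from a trace count. The only difference is presentational — you carry out the computation basis-free with $S=P^{(2)}+P^{(2)T}$ and $\mathrm{tr}$, whereas the paper first orthogonally diagonalizes $A$ and symmetrizes $P^{(2)}$ in a separate final step — so no further comment is needed.
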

 
\begin{proof}
First, assume that $P^{(2)T}=P^{(2)}$. 
Then,
\eqref{eq:c2forpolysol:ma-zhou-2015} becomes
\begin{equation}
\label{eq:cond3forpolysol:ma-zhou-2015}
\tilde {a} A-2AP^{(2)}A=0,\ \textrm{where}\ \tilde {a} =\sum_{i,j=1}^Mp_{ij}a_{ij}.
\end{equation}
Since $A$ is symmetric, 
there exists an orthogonal matrix $U\in \mathbb{R
}^{M\times M} $ such that 
\[\hat{A}= U^TAU=\diag (\hat {a}_1,\cdots, \hat{a}_M).\]
Set $
\hat{P}^{(2)}=U^TP^{(2)}U,$
and by \eqref{eq:cond3forpolysol:ma-zhou-2015},  we have
\be  {\tilde {a}} \hat{A}-2\hat{A}\hat{P}^{(2)}\hat{A}=0.
\label{eq:cond5forpolysol:ma-zhou-2015}
\ee
Since $|A|\ne 0$, we have $|\hat {A}|\ne 0$. Thus, \eqref{eq:cond5forpolysol:ma-zhou-2015} tells that 
$ \hat{P}^{(2)}=\frac{{\tilde{a}} }{2}\hat{A}^{-1}$ and further
$\hat {P}^{(2)}$ is diagonal. Therefore, we can express
\[ \hat{P}^{(2)}=\diag(\hat{p}_1,\cdots,\hat{p}_M).\]
Plugging the two diagonal matrices 
$\hat A$ and $\hat {P}^{(2)}$ 
 into \eqref{eq:cond5forpolysol:ma-zhou-2015}
  engenders
 \be
\label{eq:property1foralpha:ma-zhou-2015}
{\tilde{a}}=2\hat{a}_k\hat{p}_k, \ 1\le k\le  M.
\ee

On the other hand,
 a direct calculation can show that
$\tilde a=\sum_{i,j=1}^Ma_{ij}p_{ij}$ is an invariant under an orthogonal similarity transformation,
and thus, from $\hat A=U^TAU$ and $\hat {P}^{(2)}=U^TP^{(2)}U$, we have
 \be {\tilde{a}}=\sum_{k=1}^M\hat{a}_k\hat{p}_k.
\label{eq:property2foralpha:ma-zhou-2015}
\ee 

Now a combination of \eqref{eq:property1foralpha:ma-zhou-2015} and 
\eqref{eq:property2foralpha:ma-zhou-2015}
 tells that $ M{\tilde{a}}=2{\tilde{a}}$. 
Since $M\ge 3$, we see $ {\tilde{a}}=0$, and so,
 $\hat {P}^{(2)}=0$,
which implies that 
  $P^{(2)}=0$.

Second, if $P^{(2)}$ is not symmetric, noting that
\[ 
\sum_{i,j=1}^N p_{ij}x_ix_j = 
\sum_{i,j=1}^N \bar p_{ij}x_ix_j,
\ 
\bar p_{ij}=\frac {p_{ij}+p_{ji}}2,\ 1\le i,j\le M.
\]
we can begin with a symmetric coefficient matrix of second order Hirota bilinear derivative terms, 
 $\bar P^{(2)}=(\bar p_{ij})_{M\times M}$, to analyze quadratic function solutions. Thus,
as we just showed, 
$\bar P^{(2)}=0$.
This is exactly what we need to get.
The proof is finished.
\end{proof}

Theorem \ref{thm:condsofAnonsingular:ma-zhou-2015}
tells us about the case of $|A|\ne 0$, which says that 
if a Hirota bilinear equation admits a quadratic function solution determined by \eqref{eq:concretePQFs:ma-zhou-2015} with $|A|\ne 0$, then it
cannot contain 
any second-order Hirota bilinear derivative term.   

For the KPI and KPII equations, since 
the corresponding symmetric coefficient matrix $P^{(2)}$ is not zero, 
Theorem 
\ref{thm:condsofAnonsingular:ma-zhou-2015}
tells that any quadratic function solution $f$ cannot be expressed as
 a sum of squares of three linear functions and a constant: $f=g_1^2+g_2^2+g_3^2+d$, where 
\[ g_i= c_{i1}x+c_{i2}y+c_{i3} t+c_{i4},\ 1\le i\le 3,
\]
with $(c_{ij})_{3\times 3}$ being non-singular, which will also be showed clearly later. 

The other case is $|A|=0$, for which 
there is no requirement on inclusion of second-order Hirota bilinear derivative terms.
Obviously, when
$A=\diag(a_1,\cdots,a_{M-1},0) \ne 0$,
 \eqref{eq:cond3forpolysol:ma-zhou-2015} has 
a non-zero symmetric matrix solution 
$P^{(2)}=\diag(\underbrace{0,\cdots,0}_{M-1},1)\ne 0$ with $\tilde a=0$,
 and \eqref{eq:c1forpolysol:ma-zhou-2015} has infinitely many non-zero solutions for $\{p_{ijkl}| \, 1\le i,j,k,l,\le M\}$. 
Therefore, we can have both second- and fourth-order Hirota bilinear derivative terms in
the Hirota bilinear equation
\eqref{eq:GHEs:ma-zhou-2015}. 

\subsection{Solutions as sums of squares of linear functions}

We will explore relations between quadratic function solutions
and sums of squares of linear functions, and discuss quadratic function solutions which can be written as sums of squares of linear functions.

{\theorem
\label{thm:PQFasSumofSquaresofLFs:ma-zhou-2015}
Let a quadratic function $f$ be defined by
\eqref{eq:concretePQFs:ma-zhou-2015}. Suppose $r=\textrm{rank}(A)$. Then there exist $b^{(j)}\in \mathbb{R}^M$, $c_j\in \mathbb{R}$, $1\le j\le r$, such that 
\be 
\label{eq:sumformulationforPQF:ma-zhou-2015}
f(x) =\sum_{j=1}^r (b^{(j)T}x+c_j)^2+d.
\ee
}
\begin{proof}
We assume 
that the symmetric matrix $A$ has 
the singular value decomposition: 
\be
 A=V \left[\begin{array}{cc} \Sigma &0 \vspace{2mm}\\
0& 0\end{array}\right] 
V^T,
 \ee
 where $V\in \mathbb{R}^{M\times M}$ is orthogonal and 
\[
 \Sigma=\textrm{diag}(d_1,\cdots,d_r) ,\   d_1\ge\cdots\ge d_r>0.
\]
Upon denoting $V = (v^{(1)},v^{(2)},\cdots,v^{(M)})$ and setting
\be  {b}^{(j)}=\sqrt{d_j}\, v^{(j)},\  {c}_j=-\alpha^T {b}^{(j)},\  1\le j\le r,\ee 
we have 
$$
A=\sum_{j=1}^rd_jv^{(j)}v^{(j)T}=\sum_{j=1}^r(\sqrt{d_j}\,v^{(j)})(\sqrt{d_j}\,v^{(j)})^T
=\sum_{j=1}^r {b}^{(j)} {b}^{(j)T},
$$
and thus
\bea
f(x)&=&\sum_{j=1}^r(x-\alpha)^T {b}^{(j)} {b}^{(j)T}(x-\alpha)+d \nonumber\\
&=&\sum_{j=1}^r[(x-\alpha)^T{b}^{(j)}][(x-\alpha)^T{b}^{(j)}]^T+d\nonumber\\
&=&\sum_{j=1}^r({b}^{(j)T}x+{c}_j)^2+d.\nonumber
\eea
The proof is finished.
\end{proof}

Based on Theorem \ref{thm:positiveness:ma-zhou-2015}, and noting that constant functions are particular linear functions, the following result is a direct consequence of Theorem \ref{thm:PQFasSumofSquaresofLFs:ma-zhou-2015}.

{\corollary
Any non-negative quadratic function can be written as a sum of squares of linear functions.
}

This corollary guarantees that completing squares can transform non-negative quadratic functions into sums of 
squares of linear functions. It also proves Hilbert's 17th problem for quadratic functions. 

{\lemma \label{lem:consistency:ma-zhou-2015}
Let $N$ be a natural number, and $b^{(j)}\in \mathbb{R}^M,\  c_j\in \mathbb{R},\ 1\le j\le N$, be arbitrary.
Then the linear system
\be 
\Bigl(\sum_{j=1}^N b^{(j)}b^{(j)T}\Bigr)\alpha =-\sum_{j=1}^N c_jb^{(j)},
\label{eq:Aspeciallinearsystem:ma-zhou-2015}
\ee 
is consistent, where $\alpha \in \mathbb{R}^M$ an unknown vector.
}
\begin{proof}
Note that
the columns of the coefficient matrix $\sum_{j=1}^N b^{(j)}b^{(j)T}$
read
\[\sum_{j=1}^N b_{1}^{(j)}b^{(j)},\cdots,
\sum_{j=1}^N b_{M}^{(j)}b^{(j)}
 ,\]
where
$b^{(j)}_i$ is the $i$th component of $b^{(j)}$.
It follows that
the dimension of the column space of the coefficient matrix is equal to
the rank of
the $M\times N$ matrix $(b^{(j)}_i)_{1\le i\le M,1\le j\le N}$.
This implies that
 the column space of the coefficient matrix
is just the space spanned by $b^{(j)},\ 1\le j\le N$.
On the other hand, the given vector $-\sum_{j=1}^N c_jb^{(j)}$ belongs to the space spanned by
$b^{(j)},\ 1\le j\le N$. Therefore, the linear system is consistent.
\end{proof}

{\theorem
\label{thm:solofsumofsqures:ma-zhou-2015}
Let $N$ be a natural number, and $b^{(j)}\in \mathbb{R}^M,\  c_j\in \mathbb{R},\ 1\le j\le N$, $h\in \mathbb{R}$ be arbitrary. 
Suppose that a quadratic function $f$ is given by
\be
f(x)=\sum_{j=1}^N(b^{(j)T}x+c_j)^2+h.
\label{eq:fofsumofsqurefunctionswithh:ma-zhou-2015}
\ee
Then 
(a) 
we have 
\be 
\label{eq:positivefunctionform:ma-zhou-2015}
f(x)=
x^TAx-2b^Tx+c=
(x-A^+b )^TA(x-A^+b )+d,
\ee
where 
\be\label{eq:defofAbcdandcinsumofsqurefunction:ma-zhou-2015}
A=\sum_{j=1}^Nb^{(j)}b^{(j)T},\ 
b=-\sum_{j=1}^N c_jb^{(j)},\ 
c=\sum_{j=1}^N c^2_j+h,\ 
d = c
 -b^TA^+b;
\ee
(b)
$f$ is non-negative (or positive) if and only if $ d \ge 0 $ (or $ d >0$); and
(c) $f$ solves the Hirota bilinear equation
(\ref{eq:GHEs:ma-zhou-2015}), or equivalently
(\ref{eq:reducedHBEs:ma-zhou-2015}), if and only if
(\ref{eq:c1forpolysol:ma-zhou-2015}) and (\ref{eq:c2forpolysol:ma-zhou-2015}) are true for
the matrix $A$ and the constant $ d $ defined in (\ref{eq:defofAbcdandcinsumofsqurefunction:ma-zhou-2015}).
}
\begin{proof}
To prove part (a), we begin by computing that
\begin{eqnarray}
&& f(x)\di =\sum_{j=1}^N (b^{(j)T}x+c_j)^2+h
\nonumber
\\
&& \qquad \di =\sum_{j=1}^N (b^{(j)T}x)^T(b^{(j)T}x)+2\sum_{j=1}^N (b^{(j)T}x)c_j+\sum_{j=1}^N c^2_j+h
\nonumber
\\
&&\qquad \di = x^T \Bigl(\sum_{j=1}^N b^{(j)}b^{(j)T}\Bigr)x+2\Bigl(\sum_{j=1}^Nc_jb^{(j)T}\Bigr) x+\sum_{j=1}^N c^2_j+h\nonumber \\
&& \qquad
=x^TAx-2b^Tx+c 
,\label{eq:expofsumofsqures:ma-zhou-2015}
\end{eqnarray}
where $A,b$ and $c$ are defined in (\ref{eq:defofAbcdandcinsumofsqurefunction:ma-zhou-2015}).
It then follows from Lemma \ref{lem:consistency:ma-zhou-2015} and Theorem \ref{thm:positiveness:ma-zhou-2015} 
 that
\[
f(x)=(x-\alpha )^TA(x-\alpha )+d=(x-A^+b )^TA(x-A^+b)+d,
\]
where $\alpha$ solves $A\alpha =b$ and $d$ is defined in (\ref{eq:defofAbcdandcinsumofsqurefunction:ma-zhou-2015}). Therefore, part (a) is true.

Now, based on part (a) and noting that $A$ is positive-semidefinite,
 parts (b) and (c) are just consequences of Theorem \ref{thm:positiveness:ma-zhou-2015} and Theorem \ref{thm:condsofpositivesols:ma-zhou-2015}.
The proof is finished.
\end{proof}

This theorem tells
us 
the way of 
constructing positive quadratic function solutions through taking sums of squares of linear functions. It also leads to the following inequality involving
the Moore-Penrose pseudoinverse.

{\corollary
Let $N$ be a natural number, and $b^{(j)}\in \mathbb{R}^M,\  c_j\in \mathbb{R},\ 1\le j\le N$, be arbitrary.
Then \begin{equation}
\Bigl(\sum_{j=1}^Nc_jb^{(j)T}\Bigr)
A^+
\Bigl(\sum_{j=1}^Nc_jb^{(j)}\Bigr)
\le \sum_{j=1}^N c^2_j,\end{equation}
where $A^+$ is the
the Moore-Penrose pseudoinverse of $A=\sum_{j=1}^Nb^{(j)}b^{(j)T}$.
}

\begin{proof}
In Theorem \ref{thm:solofsumofsqures:ma-zhou-2015},
we assume that $h\ge 0$, and then
the quadratic function $f$ defined by (\ref{eq:fofsumofsqurefunctionswithh:ma-zhou-2015}) is non-negative, which means that
\[\sum_{j=1}^N c^2_j +h -b^TA^+b \ge 0 ,\]
where $b=-\sum_{j=1}^Nc_jb^{(j)}$.
The required result in the corollary follows immediately from taking a limit of the above inequality as $h\to 0$.
 \end{proof}

If the linear system (\ref{eq:Aspeciallinearsystem:ma-zhou-2015}) has a
particular solution $\alpha\in \mathbb{R}^M$ determined by
\[ b^{(j)T}\alpha =-c_j,\ 1\le j\le N,\]
 then we have
\[
\Bigl(\sum_{j=1}^Nc_jb^{(j)T}\Bigr)
A^+
\Bigl(\sum_{j=1}^Nc_jb^{(j)}\Bigr)
 =\sum_{j=1}^N c_j^2.\]
This is because by \eqref{eq:aninvariantofls:ma-zhou-2015}, we can compute that 
\[\begin{array}{l}
\displaystyle   \quad \Bigl(\sum_{j=1}^Nc_jb^{(j)T}\Bigr)
A^+
\Bigl(\sum_{j=1}^Nc_jb^{(j)}\Bigr)= \alpha ^TA\alpha 
\vspace{2mm}\\
\displaystyle 
= \alpha ^T\Bigl( \sum_{j=1}^N b^{(j)}b^{(j)T}\Bigr) \alpha
 =-\alpha ^T \sum_{j=1}^N b^{(j)}c_j=\sum_{j=1}^Nc_j^2.
\end{array}
\]

Next, we are going to present a basic characteristic of sums of squares of linear functions.

{\theorem
\label{cor:sumofsquare:ma-zhou-2015}
Let $N$ be a natural number, and $b^{(j)}\in \mathbb{R}^M$, $ c_j\in \mathbb{R},\ 1\le j\le N$, $h\in \mathbb{R}$ be arbitrary. Suppose that a quadratic function $f$ is defined by
\eqref{eq:fofsumofsqurefunctionswithh:ma-zhou-2015}, 
i.e., 
\[
f(x)=\sum_{j=1}^N(b^{(j)T}x+c_j)^2+h,
\]
and set
$  A=\sum_{j=1}^Nb^{(j)}b^{(j)T}$ and $ r=rank(A).
$
Then 
(a)
there exist $\tilde{b}^{(j)}\in \mathbb{R}^M$, $ \tilde{c}_j\in \mathbb{R},\ 1\le j\le r$, such that
\be
 f(x)=\sum_{j=1}^r(\tilde{b}^{(j)T}x+\tilde{c}_j)^2+\sum_{j=1}^N c^2_j
+h-b^TA^+b,\label{eq:fofsumofsqurefunctionsim:ma-zhou-2015}
\ee
where $
b=-\sum_{j=1}^N c_j{b}^{(j)};$
(b) if
$\di
f(x)=\sum_{j=1}^s(\hat{b}^{(j)T}x+\hat{c}_j)^2+\hat h$, where $\hat{b}^{(j)}\in \mathbb{R}^M$, $ \hat{c}_j\in \mathbb{R},\ 1\le j\le s$, $\hat h \in \mathbb{R}$, then
$s\ge r.$
}

\begin{proof}
(a) A combination of Theorem 
\ref{thm:PQFasSumofSquaresofLFs:ma-zhou-2015} and Theorem \ref{thm:solofsumofsqures:ma-zhou-2015}
leads to part (a).

(b) Note that we can rewrite 
 \[
f(x)=
\sum_{j=1}^s (\hat{b}^{(j)T}x+\hat{c}_j)^2+\hat h=
x^T\hat{A}x-2\hat{b}^Tx+\hat{c},\]
 where \[
\di\hat{A}=\sum_{j=1}^s \hat{b}^{(j)}\hat{b}^{(j)T},\ 
\hat{b}=
-\sum_{j=1}^s \hat{c}_j\hat {b}^{(j)},\ \hat {c}= \sum_{j=1}^s \hat {c}^2_j +\hat {h}.
\] 
Compared with (\ref{eq:fofsumofsqurefunctionswithh:ma-zhou-2015}),
\eqref{eq:positivefunctionform:ma-zhou-2015}
 and (\ref{eq:defofAbcdandcinsumofsqurefunction:ma-zhou-2015}), 
we see $\hat{A}=A$.
Set $\hat B=({\hat{b}}^{(1)},{\hat{b}}^{(2)},\cdots, {\hat{b}}^{(s)})$. Then $\hat A=\hat B\hat B^T$ and so
\[ r=\mathrm{rank}(A)= \mathrm{rank}(\hat{A})  = \mathrm{rank}\hat B\le s.\]
This completes the proof. 
 \end{proof}

The result (b) of 
Theorem \ref{cor:sumofsquare:ma-zhou-2015} 
tells the largest number of squares of linearly independent non-constant linear functions in a sum for a non-negative quadratic function. 

When $x=(x_1,\cdots,x_{M-1},t)$, where $t$ denotes time and $x_i, 1\le i\le M-1,$ are  
spatial variables,
positive quadratic function solutions determined by \eqref{eq:concretePQFs:ma-zhou-2015}
with a non-zero $(M,M)$ minor of $A$ 
 lead to lump solutions, and otherwise, lump-type solutions to the corresponding nonlinear equations under either of the two transformations in \eqref{eq:Tranfs:ma-zhou-2015}.

\section{Applications to generalized KP and BKP equations}

\subsection{Generalized KP equations in $(N+1)$-dimensions}

Let us first consider the generalized Kadomtsev-Petviashvili (gKP) equations in $(N+1)$-dimensions:
\be\label{eq:gKP:ma-zhou-2015}
(u_t+6uu_{x_1}+u_{x_1x_1x_1})_{x_1}+\sigma (u_{x_2x_2}+u_{x_3x_3}+\cdots+u_{x_Nx_N})=0,
\ee
where $\sigma =\mp1$ and $N\ge 2$. When $\sigma =-1$, it is called the gKPI equation, and when $\sigma =1$, the gKPII equation.

Denote $x=(x_1,x_2,\cdots,x_N,t)^T\in \mathbb{R}^{N+1}$. Take 
a positive quadratic function:
\be
f(x)=x^TAx+d
\label{eq:qfsolstogKP:ma-zhou-2015}
\ee with $A=A^T\in \mathbb{R}^{(N+1)\times (N+1)}$, $A\ge 0$ and $ d>0 $.
For any $x\in  \mathbb{R}^{N+1}$, the rational function
\[\di u=2(\ln f)_{x_1x_1}=\frac {2(ff_{11}-f_1^2)}{f^2}\] is analytical in $\mathbb{R}^{N+1}$.
Substituting it into (\ref{eq:gKP:ma-zhou-2015}), we have
$$
\begin{array}{l}
 \quad \displaystyle  (u_t+6uu_{x_1}+u_{x_1x_1x_1})_{x_1}+\sigma \sum_{j=2}^Nu_{x_jx_j}\vspace{2mm} \\
\displaystyle
=
\frac{\partial^2}{\partial x_1^2}\Bigl[f^{-2}(D_1^4+D_1D_{N+1}+\sigma  \sum_{j=2}^N D_j^2)f\cdot f\Bigr]=0,
\ \sigma =\mp 1, 
\end{array}
$$
where $D_{N+1}$ is the Hirota bilinear derivative with respect to 
time 
$t$.
Therefore, if $f$ solves the bilinear gKPI or gKPII equation:
\be\label{eq:BgKP:ma-zhou-2015}
(D_1^4+D_1D_{N+1}+\sigma  \sum_{j=2}^N D_j^2)f\cdot f=0,\ \sigma=\mp 1,
\ee
then $\di u=2(\ln f)_{x_1x_1}$ solves the gKPI or gKPII equation in (\ref{eq:gKP:ma-zhou-2015}).
Such a solution process provides us with lump or lump-type solutions to the gKPI or gKPII  equation.

{\theorem
A positive quadratic function $f$ defined by (\ref{eq:qfsolstogKP:ma-zhou-2015}) solves the bilinear gKPI or gKPII equation by (\ref{eq:BgKP:ma-zhou-2015}) if and only if
 \be\label{eq:cond1forgKP:ma-zhou-2015}
 6a_{11}^2+d \tilde{a}=0,
 \ee
and
\be\label{eq:cond2forgKP:ma-zhou-2015}
 \tilde{a}A-(A_1A_{N+1}^T+A_{N+1}A_1^T)-2\sigma \sum_{i=2}^NA_iA_i^T=0,
 \ee
where
 \be \di\tilde{a}:=a_{1N+1}+\sigma \sum_{i=2}^Na_{ii}\le 0.
\label{eq:tilde{a}smallerthanzero:ma-zhou-2015}
\ee
}
\begin{proof}
An application of
 Theorem \ref{thm:condsofpositivesols:ma-zhou-2015}
to the bilinear gKPI and gKPII equations in (\ref{eq:BgKP:ma-zhou-2015})
 tells (\ref{eq:cond1forgKP:ma-zhou-2015}) and (\ref{eq:cond2forgKP:ma-zhou-2015}).
The property $\tilde {a}\le 0$ in 
\eqref{eq:tilde{a}smallerthanzero:ma-zhou-2015}
follows from 
(\ref{eq:cond1forgKP:ma-zhou-2015}) and $ d >0$.
The proof is finished.
 \end{proof}

If $\tilde{a}=0$, then we have
 $a_{11}=0$ by   (\ref{eq:cond1forgKP:ma-zhou-2015}).
Since $A\ge 0$, we have $a_{1,N+1}=0$. Further
$$\sigma \sum_{i=2}^Na_{ii} = \di\tilde{a}- a_{1N+1}
=0.$$
However, $\sigma \ne 0$ and $a_{ii}\ge 0$ for $i=1,\cdots,N+1.$
 Thus, $a_{22}=\cdots=a_{NN}=0$, and there exists only a non-zero solution $A=(a_{ij})_{(N+1)\times (N+1)} $ 
with all $ a_{ij}=0$ except $a_{N+1,N+1}$. The corresponding solution  is $u=2(\ln f)_{x_1x_1} \equiv 0$, a trivial solution.

Now let us introduce
\be 
\ B=2\bar P^{(2)}= \left[
  \begin{array}{ccc}
    0 & 0  & 1 \\
    0 & 2\sigma  I_{N-1} &  0 \\
    1 & 0 &  0 \\
  \end{array}
\right]_{(N+1)\times (N+1)},
\ee 
where $I_{N-1}$ is the identity matrix of size $N-1$, and then the algebraic equation (\ref{eq:cond2forgKP:ma-zhou-2015}) can be written in a compact form:
\be\label{eq:transformedcond2forgKP:ma-zhou-2015}  \tilde{a}A-ABA=0,\ee
where $\tilde a $ is defined by \eqref{eq:tilde{a}smallerthanzero:ma-zhou-2015}.

{\corollary If a positive-semidefinite matrix $A$ satisfies the condition (\ref{eq:transformedcond2forgKP:ma-zhou-2015}), then
 $|A|=0$.
}

\begin{proof}
  If $|A|\ne 0$, then $\tilde{a}I_{N+1}-AB=0$, and so $A=\tilde{a} B^{-1}.$ The matrix $B$ has two eigenvalues $\pm 1$ (and an eigenvalue $2\sigma $ of multiplicity $N-1$), and thus $B^{-1}$ also has two eigenvalues $\pm 1$. Therefore, $A$ is not positive-semidefinite unless $\tilde{a}=0$. In this case, $ABA=0$, and
then $|ABA|=|A|^2|B|= 0$, which leads to $|A|=0$. A contradiction!    \end{proof}

This corollary is also a consequence of Theorem \ref{thm:condsofAnonsingular:ma-zhou-2015}.
For the ($N+1$)-dimensional KP equations, since the corresponding symmetric coefficient matrix $P^{(2)}$, defined by \eqref{eq:defofP^{(2)}:ma-zhou-2015},
 is not zero, their corresponding Hirota bilinear equations in \eqref{eq:BgKP:ma-zhou-2015}
do not possess 
any quadratic function solution which can be written as 
 a sum of squares of 
$N+1$
linearly independent linear functions.

We remark that it is not easy to find all solutions to the 
system of quadratic equations
in (\ref{eq:transformedcond2forgKP:ma-zhou-2015}). 
The following examples show us that the gKPI equations have lump or lump-type solutions.
It is also direct to observe
that any lump or lump-type solution to an ($N+1$)-dimensional gKPI equation is a lump-type solution to an ($(N+1)+1$)-dimensional gKPI equation of the same type as well.

\begin{example} Let us consider the simplest case: $N=2$. This corresponds to the (2+1)-dimensional KPI and KPII equations:
\be\label{eq:gKPin(2+1)d:ma-zhou-2015}
(u_t+6uu_{x}+u_{xxx})_{x}+\sigma u_{yy}=0,\ \sigma=\mp 1,
\ee
where we set $x_1=x$ and $x_2=y$. By using Maple, 
 we can have
\[ A=\left[
              \begin{array}{ccc}
                a & b & \sigma  (ac-2b^2)/a \\
                b & c & -\sigma  bc/a \\
                \sigma  (ac-2b^2)/a & -\sigma bc/a & \sigma ^2 c^2/a \\
              \end{array}
            \right]\  \textrm{with}\  a>0,\ c> 0,\  ac-b^2>0.
\]
This leads to
            \begin{eqnarray}&&
              f(x,y,t)=ax^2+cy^2+\frac{\sigma ^2 c^2}{a}t^2+2bxy-\frac {2\sigma bc }{a}yt+\frac{2\sigma }{a}(ac-2b^2)xt+d \qquad \  \nonumber
\\
              &&\qquad \quad\ \ \,
=a[x+\frac ba y+\frac{\sigma }{a^2}(a c -2b^2)t]^2+\frac{a c -b^2}{a}(y-\frac{2\sigma  b}{a}t)^2+  d ,
            \end{eqnarray}
which reduces to 
            \[
            f(x,y,t)=ax^2+ c y^2+\frac{ \sigma ^2 c ^2}{a}t^2+2\sigma  c xt+ d =a(x+\frac {\sigma  c t}{a})^2+ c y^2+ d ,
             \]
when $b=0$. The condition (\ref{eq:cond1forgKP:ma-zhou-2015}) now reads 
\[
 6a^2+ d [\frac {\sigma  (a c -2b^2)}a+\sigma  c]=6a^2+ 2d\frac { \sigma  (a c -b^2)}a=0,
\]
which yields
\be\label{eq:d_KPI:ma-zhou-2015}
 d =-\frac{3 a^3}{\sigma (a c -b^2)}>0.
\ee

 By Corollary \ref{cor:constant:ma-zhou-2015}, for any constants $\g_1,\g_1,\g_3\in \mathbb{R}$, 
we have the following quadratic function solutions:
 \bea
        &&f(x,y,t)= a[(x-\g_1)+\frac ba (y-\g_2)+\frac{\sigma }{a^2}(a c -2b^2)(t-\g_3)]^2\nonumber\\
              &&\ \ \qquad\qquad+\frac{a c -b^2}{a}[(y-\g_2)-\frac{2\sigma  b}{a}(t-\g_3)]^2+  d\nonumber\\
              &&\qquad \qquad=  a[x+\frac ba y+\frac{\sigma }{a^2}(a c -2b^2)t-\delta_1]^2
            \nonumber\\
              &&\qquad \qquad \ \ 
 +\frac{a c -b^2}{a}(y-\frac{2\sigma  b}{a}t-\delta _2)^2+  d,
\eea
with $\delta _1$ and $\delta _2$ being defined by
$$\delta _1=\g_1+\frac ba\g_2+\frac{\sigma }{a^2}(a c -2b^2)\g_3,\   \delta _2=\g_2-\frac{2\sigma  b}{a}\g_3.$$
Because $\g_1,\g_2,\g_3$ are arbitrary, so are $\delta _1$ and $\delta _2$. Furthermore, the corresponding lump solutions to the (2+1)-dimensional KPI equation
in \eqref{eq:gKPin(2+1)d:ma-zhou-2015}
 read
\[
\begin{array}{l}
\quad u(x,y,t)=2(\ln f)_{xx}
\vspace{2mm}
\\
 =\di\frac {\di 4\big\{  -a^2[x+\frac ba y+\frac{\sigma }{a^2}(a c -2b^2)t-\delta _1]^2
              +{(a c -b^2)}(y-\frac{2\sigma  b}{a}t-\delta _2)^2+  a d\big\}}{\big\{\di  a[x+\frac ba y+\frac{\sigma }{a^2}(a c -2b^2)t-\delta _1]^2
              +\frac{a c -b^2}{a}(y-\frac{2\sigma  b}{a}t-\delta _2)^2+  d\big\}^2},
\end{array}
\]
where $d$ is defined by (\ref{eq:d_KPI:ma-zhou-2015}), $a, b, c\in \mathbb{R}$ satisfy $a>0,\, c>0,\, ac-b^2>0$, and $\delta _1$ and $\delta _2$ are arbitrary. When taking
\[
a=1, \ b=\sqrt{3} \,a, \ c=3(a^2+b^2), \ d=\frac 1{b^2}, \ \delta _1=\delta_2=0,\ y\to \frac 1 {\sqrt{3}}y,
\]
 the resulting lump solutions reduce to the solutions in (\ref{eq:lumpofKPI:ma-lump-jm-2015}).
\end{example}

\begin{remark}
 The condition in (\ref{eq:d_KPI:ma-zhou-2015}) implies that $\sigma =-1$ in order to have lump solutions generated from positive quadratic functions. This shows that the 
(2+1)-dimensional 
KPI equation ($\sigma =-1$) possesses the discussed lump solutions whereas the (2+1)-dimensional  KPII equation
($\sigma =1$) does not.
\end{remark}

\begin{example} We consider the (3+1)-dimensional gKPI equation
\be\label{eq:KP3:ma-zhou-2015}
(u_t+6uu_{x}+u_{xxx})_{x}- u_{yy}-u_{zz}=0.
\ee
By using Maple, we have two classes of lump-type solutions below. Moreover, we will prove that there is no lump solution from quadratic functions.

\flushleft Case I - Sum of two squares: In this case, by Maple, we can have
\be f(x,y,z,t)=(f_1(x,y,z,t))^2+( f_2(x,y,z,t))^2+d,\ee 
with
\be 
 \left\{ \begin{array}{l}
     f_1(x,y,z,t)= x+l_1y+m_1z+\omega_1t -\delta _1,\\[3pt]
     f_2(x,y,z,t)=k_2x+l_2y+m_2z+ \omega_2t -\delta _2,
  \end{array}
 \right.
\label{eq:KP3_fcaseI:ma-zhou-2015}
\ee 
where $k_2,l_1,l_2,m_1,m_2,\delta _1,\delta _2\in\mathbb{R}$ are arbitrary, $l_1m_2 \ne l_2m_1$ and
$$\begin{array}{l}
  \di \omega_1=
\frac{2k_2(l_1l_2+m_1m_2)+(l_1^2-l_2^2)+(m_1^2-m_2^2) }{k_2^2+1},
\\[10pt]
 \di \omega_2=-\frac{k_2[(l_1^2-l_2^2)+(m_1^2-m_2^2)]-2(l_1l_2+m_1m_2)}{k_2^2+1},
\\[10pt]
 \di  d=\frac{3(k_2^2+1)^3}{(k_2l_1-l_2)^2+(k_2m_1-m_2)^2}.\\
 \end{array}$$
The corresponding lump-type solutions read
\[ 
u(x,y,z,t)=\frac{4[(1+k_2^2)d+(k_2^2-1)(f_1^2-f_2^2)-4k_2f_1f_2]}
{(f_1^2+f_2^2+d)^2},
\]
where $f_1$ and $f_2$ are defined by (\ref{eq:KP3_fcaseI:ma-zhou-2015}).\\

\flushleft Case II - Sum of three squares: In this case, by Maple, we can have
\be f(x,y,z,t)=(f_1(x,y,z,t))^2+( f_2(x,y,z,t))^2+( f_3(x,y,z,t))^2+d,\ee
with
\be
 \left\{ \begin{array}{l}
      f_1(x,y,z,t)= x+l_1y+m_1z+ \omega_1t -\delta _1,\\[3pt]
     f_2(x,y,z,t)=k_2x+l_2y+m_2z+ \omega_2t -\delta _2,\\[3pt]
 f_3(x,y,z,t)=l_3y+m_3z + \omega_3t -\delta _3,
  \end{array}
 \right.
\label{eq:KP3_fcaseII:ma-zhou-2015}
\ee
where $k_2,l_1,l_2,l_3\ne 0,m_1,m_3,\delta _1,\delta _2,\delta _3\in\mathbb{R}$ are arbitrary, and
$$\begin{array}{l}
  \di \omega_1=-\frac{\rho _1}{(k_2^2+1)l_3^2},\ 
 \di \omega_2=\frac {\rho _2}{(k_2^2+1)l_3^2} ,\ 
  \di\omega_3=\frac{2\rho_3
}{(k_2^2+1)l_3},\\ [10pt] 
\di m_2=\frac{-k_2l_1m_3+k_2l_3m_1+l_2m_3}{l_3},\ 
  \di
 d=\frac{3(k_2^2+1)^3l_3^2}{(l_3^2+m_3^2)[(k_2l_1-l_2)^2+(k_2^2+1)l_3^2]},
 \end{array}$$
with
$$\begin{array}{l}
\di \rho _1=
k_2^2(l_1^2m_3^2-l_3^2m_1^2)
-l_3^2(l_1^2+m_1^2)+(l_2^2 -2k_2l_1l_2 +l_3^2) (l_3^2+m_3^2) ,
\\  [5pt]
\di\rho _2=
k_2^3(l_1m_3-l_3m_1)^2-2k_2^2 l_2 (l_1m_3-l_3m_1)m_3
\\[5pt]
 \di\qquad
-k_2(l_1^2l_3^2
-l_2^2l_3^2
+2l_1l_3m_1m_3-l_2^2m_3^2-l_3^2m_1^2+l_3^2m_3^2+l_3^4)
\\[5pt]
 \di\qquad 
+2l_2l_3(l_1l_3+m_1m_2)
,\\[5pt]
\rho_3 = 
{-k_2^2(l_1m_3-l_3m_1)m_3
  +k_2l_2(l_3^2+m_3^2)+l_3(l_1l_3+m_1m_3)}.
 \end{array}$$
The corresponding lump-type solutions read
\[ u(x,y,z,t)=\frac{4[(1+k_2^2)(f_3^2+d)+(k_2^2-1)(f_1^2-f_2^2)-4k_2f_1f_2]}
{(f_1^2+f_2^2+f_3^2+d)^2},
\]
where $f_1, f_2$ and $f_3$ are defined by (\ref{eq:KP3_fcaseII:ma-zhou-2015}).
\end{example}

The formula for $m_2$ in the above example means the corresponding first minor $M_{44}$ 
is zero, and so, the presented solution is not a lump solution. 
Generally, when $N\ge 3$, there is no solution to the matrix equation \eqref{eq:transformedcond2forgKP:ma-zhou-2015} 
with a non-zero first minor 
 $M_{N+1,N+1}$, indeed. Therefore, the above gKP equations in $(N+1)$-dimensions with $N\ge 3$
have 
no lump solutions generated from quadratic functions. We prove a more general result as follows.

\begin{theorem}Let $N\ge 3$.
Then there is no symmetric matrix solution $A\in \mathbb{R}^{(N+1)\times (N+1)}$ to the matrix equation \eqref{eq:transformedcond2forgKP:ma-zhou-2015} with rank$(A)=N$, which implies that 
the $(N+1)$-dimensional gKP equations (\ref{eq:gKP:ma-zhou-2015})
have no lump solution generated from quadratic functions under the transformation $u=2(\ln f)_{xx}$.
\end{theorem}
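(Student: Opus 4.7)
The plan is to recast \eqref{eq:transformedcond2forgKP:ma-zhou-2015} as a clean polynomial identity in the single matrix $BA$, from which a short rank-trace dichotomy forces a contradiction. Left-multiplying $\tilde a\,A = ABA$ by $B$ gives
\[ (BA)^{2} = \tilde a\,(BA). \]
Two elementary facts about $B$ drive the argument. First, $B$ is invertible: after permuting rows and columns so that indices $1$ and $N+1$ are adjacent, $B$ decouples as the direct sum of the $2\times 2$ reversal matrix and the block $2\sigma I_{N-1}$, so $\det B\ne 0$. Second, reading off the nonzero entries of $B$ and using $A^{T}=A$ yields
\[ \mathrm{tr}(BA) = 2\,a_{1,N+1} + 2\sigma \sum_{i=2}^{N} a_{ii} = 2\tilde a. \]
Invertibility of $B$ combined with the hypothesis $\mathrm{rank}(A)=N$ then gives $\mathrm{rank}(BA)=N$.

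I would next split on whether $\tilde a$ vanishes. If $\tilde a\neq 0$, the matrix $M := BA/\tilde a$ is idempotent ($M^{2}=M$), hence diagonalizable with eigenvalues in $\{0,1\}$, so $\mathrm{rank}(M)=\mathrm{tr}(M)$. But $\mathrm{rank}(M)=\mathrm{rank}(BA)=N$ while $\mathrm{tr}(M)=\mathrm{tr}(BA)/\tilde a=2$, forcing $N=2$ and contradicting $N\ge 3$. If instead $\tilde a=0$, then $(BA)^{2}=0$, so $\mathrm{range}(BA)\subseteq \ker(BA)$ and therefore $\mathrm{rank}(BA)\le (N+1)/2$; combined with $\mathrm{rank}(BA)=N$ this forces $N\le 1$, again contradicting $N\ge 3$. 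Both alternatives fail, so no symmetric rank-$N$ matrix $A$ can satisfy \eqref{eq:transformedcond2forgKP:ma-zhou-2015}.

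For the lump-solutions consequence, I would observe that a genuine lump built via $u=2(\ln f)_{xx}$ from $f$ of the form \eqref{eq:qfsolstogKP:ma-zhou-2015} requires $f$ to blow up as the spatial variables $x_{1},\ldots,x_{N}$ tend to infinity at each fixed time, which forces the upper-left $N\times N$ minor $M_{N+1,N+1}$ of $A$ to be positive and hence $\mathrm{rank}(A)\ge N$. Theorem~\ref{thm:condsofAnonsingular:ma-zhou-2015}, applied to the bilinear gKP operator whose second-order coefficient matrix $P^{(2)}$ is nonzero, simultaneously rules out $\mathrm{rank}(A)=N+1$, leaving $\mathrm{rank}(A)=N$ as the only remaining option; the main statement then closes the door on that. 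The principal obstacle I foresee is not the algebra itself but finding the reformulation: a direct assault on the two-sided identity $\tilde a\,A = ABA$ is unwieldy, whereas the one-line passage to $(BA)^{2}=\tilde a\,(BA)$ collapses the problem to an idempotent/nilpotent dichotomy that settles everything in a few lines.
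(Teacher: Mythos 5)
Your argument is correct, and it reaches the conclusion by a genuinely different route from the paper. The paper's proof orthogonally diagonalizes $A$, writes $\hat A=U^TAU=\diag(\hat A_1,0)$ with $\hat A_1$ invertible of size $N$, extracts the sub-block equation $\tilde a\hat A_1-\hat A_1\hat B_1\hat A_1=0$, and then re-runs the trace-invariance argument of Theorem \ref{thm:condsofAnonsingular:ma-zhou-2015} to obtain $N\tilde a=2\tilde a$, hence $\tilde a=0$ and $\hat B_1=0$; the contradiction is that $\mathrm{rank}(\hat B)\le 2<N+1=\mathrm{rank}(B)$. You instead left-multiply $\tilde a A=ABA$ by the invertible matrix $B$ to get $(BA)^2=\tilde a\,(BA)$, note $\mathrm{rank}(BA)=\mathrm{rank}(A)=N$ and $\mathrm{tr}(BA)=2\tilde a$, and dispose of both alternatives: if $\tilde a\ne 0$ the idempotent $BA/\tilde a$ has $\mathrm{rank}=\mathrm{tr}=2$, forcing $N=2$; if $\tilde a=0$ then $BA$ is square-nilpotent, so $\mathrm{rank}(BA)\le (N+1)/2$, forcing $N\le 1$. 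All steps check out (in particular $\mathrm{tr}(BA)=a_{1,N+1}+a_{N+1,1}+2\sigma\sum_{i=2}^N a_{ii}=2\tilde a$ using $A^T=A$, and rank equals trace for real idempotents). Your version is more self-contained --- it avoids the diagonalization and the appeal to the earlier proof --- and it makes transparent why $N=2$ is exactly the borderline case where lumps survive, since the idempotent branch pins the trace at $2$. The deduction of the lump-solution consequence (nonzero $(N+1,N+1)$ minor $\Rightarrow\mathrm{rank}(A)\ge N$, Theorem \ref{thm:condsofAnonsingular:ma-zhou-2015} with $P^{(2)}\ne 0$ ruling out $|A|\ne 0$, leaving only $\mathrm{rank}(A)=N$) coincides with the paper's closing step.
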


\begin{proof}
Suppose that there is a symmetric matrix $A\in \mathbb{R}^{(N+1)\times (N+1)}$ which solves the equation \eqref{eq:transformedcond2forgKP:ma-zhou-2015} and whose rank is $N$.
Then, since $A$ is symmetric and rank$(A)=N$, there exists an orthogonal matrix 
$U\in \mathbb{R}^{(N+1)\times (N+1)}$ such that 
\[ \hat A = U^T A U = \left [\begin{array}{cc}
\hat A_1 & 0 \vspace{2mm}\\
0& 0
\end{array}\right],\ \hat A_1 =\diag (\lambda _1,\cdots ,\lambda_N), 
\]
where $\lambda _i\ne 0,\ 1\le i\le N$.
Set 
\[ \hat B = U^T B U = \left [\begin{array}{cc}
\hat B_1 & \hat B_2 \vspace{2mm}\\
\hat B_3& \hat B_4
\end{array}\right],\ \hat B_1 =(\hat b_{ij})_{N\times N}  \in \mathbb{R}^{N\times N}.
\]
Upon noting that $\tilde a$ is an invariant under an orthogonal similarity transformation, it follows from \eqref{eq:transformedcond2forgKP:ma-zhou-2015} 
that
\[ \tilde a \hat A_1-\hat A_1\hat B_1\hat A_1 =0 ,\ \tilde a =\sum_{i,j=1}^{N+1}a_{ij}p_{ij}=\frac 12  \sum_{k=1}^N \lambda _k \hat b_{kk}.
 \]
 Then,  based on this sub-matrix equation, using the same idea in the proof of Theorem \ref{thm:condsofAnonsingular:ma-zhou-2015} shows that 
$ N \tilde a = 2 \tilde a $, which leads to $\tilde a =0$ since $N\ge 3$. Further, we have $\hat B_1=0$, and thus, rank$(\hat B)\le 2$, which is a contradiction to rank$(\hat B)=\textrm{rank}(B)=N+1$. 
Therefore, there is no symmetric matrix solution $A$ to the equation \eqref{eq:transformedcond2forgKP:ma-zhou-2015} with rank$(A)=N$. 

Finally, note that
the existence of a non-zero $(N+1,N+1)$ minor 
 of $A$ implies that rank$(A)\ge N$, and thus, by Theorem \ref{thm:condsofAnonsingular:ma-zhou-2015}, we have rank$(A)=N$. 
Now, it follows that there is no symmetric matrix solution $A$ to the equation \eqref{eq:transformedcond2forgKP:ma-zhou-2015} 
with a non-zero $(N+1,N+1)$ minor.
This means that the gKP equations, defined by (\ref{eq:gKP:ma-zhou-2015}), in $(N+1)$-dimensions with $N\ge 3$
have 
no lump solution, which are generated from quadratic functions under the transformation $u=2(\ln f)_{xx}$.  
The proof is finished.
\end{proof}

\subsection{Generalized KP and BKP equations with general 2nd-order derivatives}

Let us next consider generalized KP and BKP equations with a general sum of second-order Hirota derivative terms. We will present lump solutions to those two generalized KP and BKP equations.

\begin{example}
We consider the 
following generalized KP (gKP) equation:
\be\label{eq:ggKP:ma-zhou-2015}
K_{gKP1}(u)=(6uu_{x}+u_{xxx})_{x}+
c_1u_{xx}+2c_2u_{xy}+2c_3u_{xt}
+c_4u_{yy}+2c_5u_{yt}+c_6u_{tt}
=0,
\ee
with arbitrary constant coefficients $c_i,\ 1\le i\le 6.$
Under the typical transformation $u=2(\ln f)_{xx}$,
this general equation itself has a Hirota bilinear form: 
\be B_{gKP1}(f)=
( D_x^4+ c_1
D_x^2+2c_2D_xD_y+2c_3D_xD_t+c_4D_y^2+2c_5D_yD_t+c_6D_t^2
 )f\cdot f=0,\label{eq:gbKPI:ma-zhou-2015}
\ee
since we have
\[ 
K_{gKP1}(u)=\Bigl(\frac {B_{gKP1}(f)}{f^2}\Bigr)_{xx}.
\]
The equation \eqref{eq:ggKP:ma-zhou-2015} reduces to the (2+1)-dimensional KPI and KPII equations in \eqref{eq:gKPin(2+1)d:ma-zhou-2015},
upon taking \[c_1=c_2=0, \ c_3 =\frac 12 ,\ c_4=\sigma ,\ c_5=c_6=0.\]

To search for quadratic function solutions
to the (2+1)-dimensional 
bilinear gKP equation \eqref{eq:gbKPI:ma-zhou-2015},
we start with
\be
f=g_1^2+g_2^2+a_{9},
\
g_1=a_1 x +a_2 y +a_3 t+a_4,\
g_2=a_5 x +a_6 y  +a_7 t+a_{8},
\label{eq:defoffforggKP:ma-zhou-2015}
\ee
where $a_i$, $1\le i\le 9$, are real parameters to be determined.
A direct Maple symbolic computation with this function $f$ 
generates the set of three constraining equations for the parameters and the coefficients:
\be
\left\{\begin{array}{l}
\displaystyle   a_{{9}}=  
-\frac {3(a_1^2+a_5^2)^3}
{
\mu _1 c_{{4}}+\mu _2 c_{{5}}+ \mu _3 c_{{6}}
},
\vspace{2mm}
\\
\displaystyle
c_1=\frac {\nu _{1,1}c_3+\nu _{1,2}c_4+\nu _{1,3}c_5+\nu _{1,4}c_6 }{(a_1^2+a_5^2) (a_{{1}}a_{{6}}-a_{{2}}a_{{5}})},
\vspace{2mm}\\
\displaystyle
c_2=-\frac {\nu _{2,1}c_3+\nu _{2,2}c_4+\nu _{2,3}c_5+\nu _{2,4}c_6 }{(a_1^2+a_5^2) (a_{{1}}a_{{6}}-a_{{2}}a_{{5}})}
,
\end{array}\right. \ee 
with
\be 
\left\{ \begin{array}
{l}
\displaystyle
\mu _1=( a_{{1}}a_{{6}}-a_{{2}}a_{{5}} ) ^{2}
,\ 
\vspace{2mm}
\\
\displaystyle
\mu _2=
2 ( a
_{{1}}a_{{6}}-a_{{2}}a_{{5}} )  ( a_{{1}}a_{{7}}-a_{{3}}a_{
{5}} )
,\ 
\vspace{2mm}
\\
\displaystyle \mu _3=( a_{{1}}a_{{7}}-a_{{3}}a_{{5}} ) ^{2
}
,
\end{array}
\right.
\label{eq:defformu_i:ma-zhou-2015}
\ee 
\be
\left\{ \begin{array}
{l}
\displaystyle
\displaystyle
\nu _{1,1}=
2 ( {a^2_{{1}}}+{a^2_{{5}}} )  ( a_{{2}}a_{{7}}-
a_{{3}}a_{{6}} )
,\ 
\vspace{2mm}
\\
\displaystyle
\nu _{1,2}=
( {a^2_{{2}}}+{a^2_{{6}}})  ( a_{{1}}a_{{6}}-a_{
{2}}a_{{5}} 
)
,\ 
\vspace{2mm}
\\
\displaystyle
\nu _{1,3}=
2 ( {a^2_{{2}}}+{a^2_{{6}}} )  ( a_{{1}}a_{{7}}-
a_{{3}}a_{{5}}
),\ 
\vspace{2mm}
\\
\displaystyle
\nu _{1,4}=
2 a_{{3}}a_{{7}} ( a_{{1}}a_{{2}}-a_{{5}}a_{{6}} ) -
  ( {a^2_{{3}}}-{a^2_{{7}}}) 
( a_{{1}}a_{{6}}+a_{{2}}a_{{5}} ) 
\end{array}
\right.
\label{eq:deffornu_1i:ma-zhou-2015}
\ee
and 
\be 
\left\{
\begin{array}{l}
\displaystyle
\nu _{2,1}=( {a^2_{{1}}}+{a^2_{{5}}} )  ( a_{{1}}a_{{7}}-a_{
{3}}a_{{5}} ) 
,\ 
\vspace{2mm}
\\
\displaystyle
\nu _{2,2}= ( a_{{1}}a_{{2}}+a_{{5}}a_{{6}} )  ( a_{{1}}a_{{6}}-a
_{{2}}a_{{5}}
),\ 
\vspace{2mm}
\\
\displaystyle
\nu _{2,3}=( {a^2_{{1}}}-{a^2_{{5}}} )  ( a_{{2}}a_{{7}}+a_{
{3}}a_{{6}} ) -2\,a_{{1}}a_{{5}} ( a_{{2}}a_{{3}}-a_{{6}}a_
{{7}} 
),\ 
\vspace{2mm}
\\
\displaystyle
\nu _{2,4}=
( a_{{1}}a_{{3}}+a_{{5}}a_{{7}} )  ( a_{{1}}a_{{7}}-a
_{{3}}a_{{5}}),
\end{array}\right.
\label{eq:deffornu_2i:ma-zhou-2015}
\ee 
where all involved other parameters and coefficients are arbitrary provided that the expressions make sense.

When a determinant condition 
\be a_{{1}}a_{{6}}-a_{{2}}a_{{5}}=\left|\begin{array} {cc}
a_1 & a_2\vspace{2mm}\\
a_5 & a_6
\end{array} \right| \ne 0,
\label{eq:C1forLumptoggKP:ma-zhou-2015}
 \ee
is satisfied, 
the above quadratic function $f$ will be positive if and only if $a_9>0$, i.e.,
\be  
\mu _1 c_4+\mu _2c_5 +\mu _3 c_6 <0,
\label{eq:C2forLumptoggKP:ma-zhou-2015}
\ee
and
the resulting class of positive quadratic function solutions generates lump solutions to
the (2+1)-dimensional gKPI equation \eqref{eq:ggKP:ma-zhou-2015} through the transformation
$u=2(\ln f)_{xx}$:
\be \label{eq:LumpSolstoggKP:ma-zhou-2015}
u=\frac {4(a_1^2+a_5^2)f-8(a_1g_1+a_5g_2)^2}{f^2},
\ee 
where the functions $f,g_1,g_2$ are determined above.

In this class of lump solutions, all involved eight parameters $a_i,\ 1\le i
\le 8$, and four coefficients $c_i, \ 3\le i\le 6$, are arbitrary provided that the two 
conditions, \eqref{eq:C1forLumptoggKP:ma-zhou-2015} and \eqref{eq:C2forLumptoggKP:ma-zhou-2015},
 are satisfied.
 The determinant condition \eqref{eq:C1forLumptoggKP:ma-zhou-2015}
precisely means that two directions $(a_1,a_2)$ and $(a_5,a_6)$ in the $(x,y)$-plane are not parallel, which guarantees, together with 
  \eqref{eq:C2forLumptoggKP:ma-zhou-2015}, that the resulting solutions in \eqref{eq:LumpSolstoggKP:ma-zhou-2015} are lump solutions.

For the standard KPI and KPII equations in
\eqref{eq:gKPin(2+1)d:ma-zhou-2015},
we have 
\[
a_9= -
 {\frac {3 \sigma ( {a^2_{{1}}}+{a^2_{{5}}} ) ^{3}}{ ( a
_{{1}}a_{{6}}-a_{{2}}a_{{5}} ) ^{2}}},
\] 
and obtain
\[ 
 a_3=- {\frac {\sigma (a_{{1}}{a_{{2}}}^{2}-a_{{1}}{a_{{6}}}^{2}+2\,a_{{2}}a_{{5}}a_{
{6}}) }{{a_{{1}}}^{2}+{a_{{5}}}^{2}}}
,\ 
a_7= - {\frac {\sigma (2\,a_{{1}}a_{{2}}a_{{6}}-{a_{{2}}}^{2}a_{{5}}+a_{{5}}{a_{{6}}}
^{2})}{{a_{{1}}}^{2}+{a_{{5}}}^{2}}},
\]
upon solving the system
\[ c_1= \frac 12 v_{1,1} +\sigma  v_{1,2}=0,\ c_2=\frac 12 v_{2,1} +\sigma  v_{2,2}=0, \]
for $a_3$ and $a_7$.
This exactly produces to the lump solution presented in 
(Ma 2015) for the KPI equation, but the resulting solution to the KPII equation has pole 
singularity in the $(x,y)$-plane at any time, due to $a_9<0$.

\end{example}

\begin{example}
We consider the following generalized BKP (gBKP) equation:
\begin{eqnarray}
&
K_{gBKP}(u)=( 15 u_x^3+15 u_xu_{3x}+u_{5x})_{x}+
c_1  [ u_{3x,y} +3(u_x u_{y})_x  ] 
\nonumber \\
& 
  + 
c_2 u_{xx} 
+2c_3u_{xy}
+2c_4u_{xt}
+c_5u_{yy}+2c_6u_{yt}+c_7u_{tt}
=0,\label{eq:ggBKP:ma-zhou-2015}
\end{eqnarray}
with arbitrary constant coefficients $c_i,\ 1\le i\le 7.$
Under the other typical transformation $u=2(\ln f)_{x}$,
this general equation itself has a Hirota bilinear form: 
\begin{eqnarray}
& B_{gBKP}(f)=
( D_x^6+ c_1
D_x^3D_y +c_2  D_x^2  + 2c_3  D_xD_y
\nonumber
\\
& \ +2c_4D_xD_t+c_5D_y^2+2c_6D_yD_t+c_7D_t^2
 )f\cdot f=0,
\label{eq:gbBKP:ma-zhou-2015}
\end{eqnarray}
since we have
\[ 
K_{gBKP}(u)=\Bigl(\frac {B_{gBKP}(f)}{f^2}\Bigr)_{x}.
\]
The equation \eqref{eq:ggBKP:ma-zhou-2015} reduces to the 
$(2+1)$-dimensional BKP equations: 
\be 
\label{eq:sBKPs:ma-zhou-2015}
(u_t+ 15 u_x^3+15 u_xu_{3x} - 15 u_xu_y +u_{5x})_{x}-5 u_{3x,y}  +5\sigma  u_{yy}=0,\ \sigma=\mp 1, 
\ee
upon taking \be
c_1=-5,\  c_4=\frac 12 ,\ c_5=  5\sigma ,\ c_2=c_3=c_6=c_7=0.
\label{eq:defofC_iforBKP:ma-zhou-2015}
\ee

To search for quadratic function solutions
to the (2+1)-dimensional
bilinear gBKP equation \eqref{eq:gbBKP:ma-zhou-2015},
we begin with the same class of quadratic functions defined by 
\eqref{eq:defoffforggKP:ma-zhou-2015}.
A similar direct Maple symbolic computation with $f$ 
leads to the set of three constraining equations for the parameters and the coefficients:
\be
\left\{\begin{array}{l}
\displaystyle   a_{{9}}=  
-\frac { 3(a_1a_2+a_5a_6)(a_1^2+a_5^2)^2c_1}
{
\mu _1 c_{{5}}+\mu _2 c_{{6}}+ \mu _3 c_{{7}}
},
\vspace{2mm}
\\
\displaystyle
c_2=\frac {\nu _{1,1}c_4+\nu _{1,2}c_5+\nu _{1,3}c_6+\nu _{1,4}c_7 }{(a_1^2+a_5^2) (a_{{1}}a_{{6}}-a_{{2}}a_{{5}})},
\vspace{2mm}\\
\displaystyle
c_3=-\frac {\nu _{2,1}c_4+\nu _{2,2}c_5+\nu _{2,3}c_6+\nu _{2,4}c_7 }{(a_1^2+a_5^2) (a_{{1}}a_{{6}}-a_{{2}}a_{{5}})}
,
\end{array}\right. \ee 
with 
$\mu_i,$ $ 1\le i\le 3,$ $\nu_{1,i},$ $1\le i\le 4$, and $\nu_{2,i},$ $ 1\le i\le 4$, being defined  
by \eqref{eq:defformu_i:ma-zhou-2015}, \eqref{eq:deffornu_1i:ma-zhou-2015} and \eqref{eq:deffornu_2i:ma-zhou-2015}.
All involved parameters and coefficients are arbitrary provided that the expressions make sense.

When we require a determinant condition in
\eqref{eq:C1forLumptoggKP:ma-zhou-2015},
the presented quadratic function $f$ will be positive if and only if $a_9>0$, which means 
\be  
\frac { (a_1a_2+a_5a_6)c_1}
{
\mu _1 c_{{5}}+\mu _2 c_{{6}}+ \mu _3 c_{{7}}}
<0,
\label{eq:C2forLumptogBKP:ma-zhou-2015}
\ee
and
the resulting class of positive quadratic function solutions yields lump solutions to
the (2+1)-dimensional gBKP equation \eqref{eq:ggBKP:ma-zhou-2015} through the transformation
$u=2(\ln f)_{x}$:
\be \label{eq:LumpSolstogBKP:ma-zhou-2015}
u=\frac {4(a_1g_1+a_5g_2)}{f},
\ee 
where the functions $f,g_1,g_2$ are defined above.

In this presented class of lump solutions, all involved eight parameters $a_i,\ 1\le i
\le 8$, and five coefficients $c_1,\,c_i,\ 4\le i\le 7$, are arbitrary provided that the two 
conditions, \eqref{eq:C1forLumptoggKP:ma-zhou-2015} and \eqref{eq:C2forLumptogBKP:ma-zhou-2015},
 are satisfied.
 The determinant condition \eqref{eq:C1forLumptoggKP:ma-zhou-2015}
exactly requires that two directions $(a_1,a_2)$ and $(a_5,a_6)$ in the $(x,y)$-plane are not parallel, which similarly guarantees, together with 
  \eqref{eq:C2forLumptogBKP:ma-zhou-2015}, that the presented solutions in \eqref{eq:LumpSolstogBKP:ma-zhou-2015} are lump solutions.

The coefficient constraints \eqref{eq:defofC_iforBKP:ma-zhou-2015}
engender the standard BKP equations in 
\eqref{eq:sBKPs:ma-zhou-2015}.
In this case, similarly we have 
\[
a_9= 
  {\frac {3\sigma (a_1a_2+a_5a_6) ( {a^2_{{1}}}+{a^2_{{5}}} ) ^{2}}{ ( a
_{{1}}a_{{6}}-a_{{2}}a_{{5}} ) ^{2}}},
\] 
and obtain
\[ 
a_3= - {\frac {5\sigma (a_{{1}}{a_{{2}}}^{2}-a_{{1}}{a_{{6}}}^{2}+2\,a_{{2}}a_{{5}}a_{{6}})}
{{a_{{1}}}^{2}+{a_{{5}}}^{2}}}
,\ 
a_7= - {\frac {5\sigma (2\,a_{{1}}a_{{2}}a_{{6}}-{a_{{2}}}^{2}a_{{5}}+a_{{5}}{a_{{6}}}
^{2})}
{{a_{{1}}}^{2}+{a_{{5}}}^{2}}},
\]
upon solving the system
\[ c_2= \frac 12 v_{1,1} + 5 \sigma  v_{1,2}=0,\ c_3=\frac 12 v_{2,1} + 5 \sigma v_{2,2}=0, \]
for $a_3$ and $a_7$.
This generates lump solutions to the BKP equations in 
\eqref{eq:sBKPs:ma-zhou-2015}, when $(a_1a_2+a_5a_6) <0$ with 
minus
sign ($\sigma =-1$)
or $(a_1a_2+a_5a_6)>0$ with plus
sign ($\sigma =1$). This solution phenomenon is pretty different from 
what we presented for the standard KP equations in the previous example. 
The solution case with minus sign also covers the lump solution presented in (Gilson and Nimmo 1990).
Actually, if we take 
\[
a_1=1, \ a_2=3(\alpha ^2-\beta ^2), \ a_5=0,\ a_6=6\alpha \beta, 
\]
where $\alpha $ and $\beta $ are arbitray but $\beta ^2<\alpha ^2$,
which leads to 
\[ a_3= 45(\alpha ^4-  6 \alpha ^2\beta ^2 +\beta ^2),\ a_7=180\alpha \beta (\alpha ^2-\beta ^2),
 \ a_9= \frac {\beta ^2 -\alpha ^2}{4\alpha ^2 \beta ^2}, \] 
then the resulting lump solution is exactly the one in (Gilson and Nimmo 1990). 

\end{example}


\section{Concluding remarks}

In this paper, we studied positive quadratic function solutions to Hirota bilinear equations. Sufficient and necessary conditions 
for the existence of such polynomial solutions were given. In turn, positive quadratic function solutions generate lump or lump-type 
solutions to nonlinear partial differential equations possessing Hirota bilinear forms. Applications were made for 
a few generalized KP and BKP equations.

We remark that putting Theorem \ref{thm:positiveness:ma-zhou-2015} and 
Theorem \ref{thm:PQFasSumofSquaresofLFs:ma-zhou-2015} together proves 
Hilbert's 17th problem for quadratic functions, 
but the conjecture is not true for higher-order polynomial functions.
Moreover, 
Theorem \ref{thm:positiveness:ma-zhou-2015} provides a criterion for the positivity of quadratic functions.
It, however, still remains open how to determine the positivity 
of higher-order multivariate polynomials,
which is a further problem of Hilbert's 17th problem. 
It should be also interesting to look for positive polynomial solutions to generalized bilinear equations (Ma 2011), 
which generate exact rational function solutions to novel types of nonlinear differential equations
(Shi et al. 2015; Zhang Y and Ma 2015; Zhang YF and Ma 2015). The first example of such solutions one can try
could be positive quartic function solutions.

It is evident that
if $A$ is positive-definite in the quadratic function $f$ defined by (\ref{eq:concretePQFs:ma-zhou-2015}), then
$f\to \infty $ when $|x|\to \infty$ in any direction in $\mathbb{R}^M$.
This guarantees that
$u=2(\ln f)_{x_1}\to 0$ and $u=2(\ln f)_{x_1x_1}\to 0$ 
as $|x|\to \infty$
in any direction in $\mathbb{R}^M$, and so they yield lump solutions, rationally localized solutions in all directions in space and time.
If $A$ is positive-semidefinite, then $u=2(\ln f)_{x_1}$ and $u=2(\ln f)_{x_1x_1}$
do not go to zero in all directions in 
$\mathbb{R}^M$
 but may go to zero
in all directions in a subspace of $\mathbb{R}^M$.
Therefore, they usually lead to lump-type solutions, and lump solutions if the subspace is the actual space 
which the spatial variables belong to. Three of such examples about the generalized KP and BKP equations were just discussed.

Through Theorem \ref{thm:positiveness:ma-zhou-2015}, we can obtain a by-product, which partially answers an open question in (Ma 2013; Ma et al. 2012):
how to determine a real multivariate polynomial which has only one zero?
By Theorem \ref{thm:positiveness:ma-zhou-2015}, a quadratic function $f$ has only one zero at $\alpha \in \mathbb{R}^M$ if and only if
\[f(x)= (x-\alpha )^TA(x-\alpha ),\ x\in \mathbb{R}^M,\]
where $A\in \mathbb{R}^{M\times M}$ is positive-definite or negative-definite, since a 
multivariate polynomial is either non-negative or non-positive if it has one zero.
If $A$ above is only positive-semidefinite, then there must exist infinitely many zeros.

Moreover, Theorem \ref{thm:positiveness:ma-zhou-2015} tells that
if a quadratic function $f$ is positive on $\mathbb{R}^M$, i.e., 
$f(x)>0$ for all $x\in \mathbb{R}^M$, then
there is a positive constant $d$ such that $f(x)\ge d$ for all $x\in \mathbb{R}^M$.
But this is not the case for higher order multivariate polynomials. There are counterexamples:
$$f_{mn}(x,y)=x^{2m}+(x^ny^n-1)^2,\quad m,n\in \mathbb{N},$$
for which $f_{mn}(x,y)>0$  since $f_{mn}(0,y)=1$ and $f_{mn}(x,y)\ge x^{2m}>0$ for $x\ne 0$. It is apparent that 
 $\di \lim_{k\to \infty}f_{mn}(k^{-1},k)=0$. This clearly implies that $f_{mn}$ cannot be bounded from below by any positive constant.

\vskip 3mm

\small
\noindent{\it Acknowledgments:}
The work was supported in
part by 
 NSFC under the grants 11371326 and 11271008.
The authors would also like to thank X. Gu,  X. Lu, H. C. Ma, S. Manukure, M. Mcanally, B. Shekhtman,  F. D. Wang, X. L. Yang, Y. Q. Yao and Y. J. Zhang for their valuable discussions about lump solutions.




\end{document}